\theoremstyle{plain}
\newtheorem{mythm}{Theorem} \numberwithin{mythm}{section}
\newtheorem{myprop}[mythm]{Proposition}
\newtheorem{mylemma}[mythm]{Lemma}
\newtheorem{mydef}[mythm]{Definition}
\newtheorem{myrek}[mythm]{Remark}
\DeclareMathAlphabet\scr{U}{scr}{m}{n}
\SetMathAlphabet\scr{bold}{U}{scr}{b}{n}
  \DeclareFontFamily{U}{scr}{\skewchar\font'177}%
  \DeclareFontShape{U}{scr}{m}{n}{<-6>rsfs5<6-8>rsfs7<8->rsfs10}{}%
  \DeclareFontShape{U}{scr}{b}{n}{<-6>rsfs5<6-8>rsfs7<8->rsfs10}{}%
\numberwithin{equation}{section}
\DeclareMathOperator{\esr}{ESR}
\DeclareMathOperator{\rr}{\mathbb{R}}
\begin{document}
\title{Rebalancing with Linear and Quadratic Costs\footnote{The authors are grateful to Paolo Guasoni for fruitful discussions, and to two anonymous referees for their pertinent remarks.}}
\author{
Ren Liu\thanks{ETH Z\"urich, Departement Mathematik, R\"amistrasse 101, CH-8092, Z\"urich, Switzerland, email:
\texttt{ren.liu@math.ethz.ch}.}
\and
Johannes Muhle-Karbe\thanks{Carnegie Mellon University, Department of Mathematical Sciences, 5000 Forbes Avenue, Pittsburgh, PA 15213, USA, email:
\texttt{johannesmk@cmu.edu}.}
\and
Marko H. Weber\thanks{Columbia University, Department of Industrial Engineering and Operations Research, 535G S.\ W.\ Mudd Building, New York, NY 10027, USA  email:
\texttt{mhw2146@columbia.edu}.}}

\date{\today}
\pagestyle{plain}
\maketitle
%-------------------------------------------------------------------------------------------------------------------------------------------------------------------------------
%------------------------------------------------ ABSTRACT---------------------------------------------------------------------------------------------------------------------
%-------------------------------------------------------------------------------------------------------------------------------------------------------------------------------
\begin{abstract}
We consider a financial market consisting of one safe and one risky asset, which offer constant investment opportunities. Taking into account both proportional transaction costs and linear price impact, we derive optimal rebalancing policies for representative investors with constant relative risk aversion and a long horizon.
\end{abstract}

\noindent\textbf{Mathematics Subject Classification: (2010)}: 91G10, 91G80 

\noindent\textbf{JEL Classification:} G11, G12

\noindent\textbf{Keywords:} price impact, transaction costs, portfolio choice, long-run.

%-------------------------------------------------------------------------------------------------------------------------------------------------------------------------------
%----------------------------------------------------INTRO-------------------------------------------------------------------------------------------------------------------
%-------------------------------------------------------------------------------------------------------------------------------------------------------------------------------
\section{Introduction}

Proportional transaction costs are ubiquitous even in the most liquid financial markets in the form of bid-spreads. For large institutional investors, the price impact of their trades also is a key concern.\footnote{For small private investors, fixed commissions, levied on each trade regardless of its size, are also crucial (cf., e.g., \cite{korn.98} and the references therein). Yet, for large portfolios, their influence becomes negligible \cite{altarovici.al.13}, and we disregard them in the present study.} As a result, both frictions have generated voluminous literatures that analyze how to balance the gains and costs of portfolio rebalancing optimally. Bid-ask spreads lead to trading costs \emph{linear} in the amounts transacted. Then, it is optimal to refrain from trading while the uncontrolled portfolio lies inside some ``no-trade region'' around the frictionless optimum; once its boundaries are breached, one performs the minimal amount of rebalancing necessary to remain inside.\footnote{See \cite{magill.constantinides.76,constantinides.86,davis.norman.90,dumas.luciano.91,shreve.soner.94,dai.yi.09} for models with constant investment opportunities, \cite{shreve.soner.94,janecek.shreve.04,bichuch.12,gerhold.al.14} for corresponding explicit asymptotic formulas, and \cite{martin.12,soner.touzi.13,possamai.al.13,kallsen.muhlekarbe.13,kallsen.li.13} for recent extensions to more general settings.} In contrast, linear price impact leads to \emph{quadratic} trading costs, which are less severe for small trades but become prohibitively expensive for larger orders. As a result, optimal policies typically prescribe rebalancing at all times but at a finite absolutely continuous rate, in contrast to the singular controls used with proportional transaction costs.\footnote{Price impact has been studied extensively in the optimal execution literature, see, e.g., \cite{bertsimas.lo.98,almgren.chriss.01,schied.schoeneborn.09}.  Recently, increasing attention has also been devoted to its influence on dynamic portfolio choice \cite{garleanu.pedersen.13a,garleanu.pedersen.13b,almgren.li.11,guasoni.weber.13,dufresne.al.12,moreau.al.14}.} 

All of the extant literature studies either linear or quadratic trading costs. This paper fills this gap by analyzing the \emph{joint} impact of proportional transaction costs and linear price impact on portfolio rebalancing. 

To make the model tractable, we focus (as in Gerhold et al.~\cite{gerhold.al.14} resp.~Guasoni and Weber~\cite{guasoni.weber.13} for proportional resp.~quadratic costs treated separately) on a single risky asset with constant investment opportunities, and a representative agent with constant relative risk aversion and a long horizon. The representative investor's wealth serves as a proxy for a multiple of market capitalization, and prize impact is assumed to be inversely proportional to latter. That is, a trade of a given size has a smaller impact as markets grow, compare the discussion in Guasoni and Weber~\cite{guasoni.weber.13}. The present study builds on the results of Guasoni and Weber~\cite{guasoni.weber.13} by adding a bid-ask spread. This price impact model leads to stable long-term behavior. The most popular alternative -- constant price impact as in \cite{bertsimas.lo.98,madhavan.00,almgren.chriss.01} -- is well-suited to the short-horizons optimal execution problems these authors have in mind. However, it leads to degenerate results in the long run, as growing fund values eventually make rebalancing prohibitively expensive. To bridge the gap between the different models, we argue that, asymptotically for small costs, our results formally extend to rather general specifications of the price impact function.

The optimal policy in the presence of both frictions turns out to be of the following form. As with proportional transaction costs, there is a no-trade region, where it is optimal to simply hold the current portfolio. Once its boundaries are breached, price impact rules out singular controls; hence, one instead starts trading at some finite rate so as to steer the portfolio back to the no-trade region. Since this policy does not allow to keep the portfolio uniformly close to the frictionless target, trading starts earlier than 
in a model with only proportional costs, i.e., the width of the no-trade region is decreased by the additional price impact. On the other hand, since there are now two frictions contributing to the total rebalancing cost, the total trading rate is always lower than in a model that only takes into account price impact. We prove a rigorous verification theorem that identifies the trading boundaries, the trading rate, and the associated welfare through the solution of a nonlinear free-boundary value problem that can be solved numerically. To ease implementation, we also present asymptotics for small linear and quadratic costs, which reduce the computation of the optimal policy and welfare to finding the root of a scalar function.\footnote{This is similar in complexity to the asymptotics for small fixed and proportional costs studied by \cite{korn.98}.} With two competing frictions, it is particularly important to assess the quality of the asymptotic approximations, since it is not clear a priori whether the matched rescaling leads to 
accurate results. Our exact formulas allow to do this, and show that the small-cost approximations perform very well. 

A formal extension of our asymptotic results suggests that, similarly as for proportional costs~\cite{kallsen.muhlekarbe.13} and price impact~\cite{moreau.al.14} considered separately, the optimal policy is robust to the form of the trading cost. To wit, in Section \ref{ss:robust}, we consider general price impact functions depending on the large investor's wealth and an additional exogenous state variable (say, the wealth of other traders) in an arbitrary manner. Then, the same asymptotic trading rate as in our baseline model obtains, substituting the current value of the trading cost at each point in time.

 The remainder of the paper is organized as follows. Section \ref{sec:model} describes the model. Our main results are presented in Section~\ref{sec:main}. Subsequently, we illustrate them with some numerical examples. Section~\ref{heuristic} contains a heuristic derivation of our main results; our asymptotics are presented in Section~\ref{sec:asymptotics}. For better readability, all proofs are delegated to Section~\ref{sec:proofs}.

\section{Model}\label{sec:model}
Consider a market consisting of one safe asset normalized to one,\footnote{That is, the interest rate $r$ is set equal to zero.} and one risky asset, whose mid price $S_t$ follows geometric Brownian motion:
\begin{equation}
\frac{d S_t}{S_t}=\mu d t+\sigma d W_t.
\end{equation}
Here, $(W_t)_{t \geq 0}$ is a standard one-dimensional Brownian motion, $\mu >0$ is the expected excess return, and $\sigma > 0$ is the volatility. Trades are not settled at the idealized best quote $S_t$. Instead, sales only earn lower bid prices, whereas purchases are charged higher ask prices. Moreover, trading large positions quickly  moves prices further in an adverse direction. To wit, the average execution price for trading $\Delta\theta$ shares over a time interval $\Delta t$ is 
\begin{equation}\label{modelassump}
S_t\left(1+\varepsilon\mathrm{sgn}(\Delta\theta)+\lambda\frac{S_t\Delta\theta }{X_t\Delta t}\right).
\end{equation}
Here, the first term corresponds to a relative bid-ask spread $\varepsilon$, i.e., a higher ask price $(1+\varepsilon)S_t$ for purchases and a lower bid price $(1-\varepsilon)S_t$ for sales, respectively. The second term describes the additional (relative) price impact of large trades executed quickly. This price impact is proportional to the monetary trading rate $S_t\Delta\theta/\Delta t$, and inversely proportional to market capitalization, which is proxied by the representative investor's wealth $X_t$.\footnote{See \cite{guasoni.weber.13} for more details on this price impact model and the related literature.}  The constant of proportionality $\lambda$ in turn quantifies the market's limited liquidity; put differently, $1/\lambda$ measures market ``depth''. For $\varepsilon, \lambda \to 0$, one recovers the classical frictionless case, where arbitrary amounts $\Delta\theta$ can be purchased or sold at the mid price for $S_t \Delta\theta $. Nontrivial bid-ask spreads ($\varepsilon>0$) and finite market 
depth ($\lambda>0$) lead to additional linear and quadratic trading costs, respectively. Specifically, with both frictions, the execution cost of trading $\Delta\theta$ shares over a time interval $\Delta t$ is given by\footnote{Note the price impact is purely temporary in our model, in that no trade influences the subsequent ones. There is a large literature on optimal execution with persistent price impact, that only wears off gradually after the completion of each trade (cf.\ \cite{obizhaeva.wang.13} as well as many more recent studies). Since we are working on a much longer time scale than in this literature, we abstract from this issue and instead suppose that the temporary and persistent impact costs generated by various ``sub-trades'' on a finer ``execution time-grid'' are all aggregated into our price impact cost. Indeed, suppose each infinitesimal sub-trade is executed in the setting of Obizhaeva and Wang~\cite{obizhaeva.wang.13}. Then, the expected execution costs are of the same linear-quadratic form as in our model.}
\begin{equation}\label{defpriceimpact}
S_t \Delta\theta+\varepsilon S_t \left|\frac{\Delta\theta}{\Delta t}\right|\Delta t+ \lambda \frac{S_t^2\Delta\theta^2}{X_t \Delta t^2}\Delta t.
\end{equation}

 For tractability, we now pass to the continuous-time limit. Denote by $\theta_t$ the number of risky shares the investor holds at time $t$, and replace $\Delta \theta/\Delta t$ in~\eqref{defpriceimpact} with $\dot{\theta}_t := \lim_{h \downarrow 0} \frac{\theta_{t+h}-\theta_t}{h} $. Then, the investor's cash position $C_t = X_t- S_t \theta_t$ evolves as
\begin{equation*}
dC_t = - S_t d\theta_t -\varepsilon S_t |\dot{\theta}_t|dt - \lambda \frac{S_t^2 \dot \theta^2_t}{X_t} dt.
\end{equation*}
Write $u_t := \dot \theta_t S_t/X_t$ for the \emph{wealth turnover} at time $t$. With this notation, a straightforward application of It\^o's formula (compare~\cite[Lemma A.1]{guasoni.weber.13}) shows that the corresponding wealth process $X_t:= \theta_t S_t + C_t$ and risky weight $Y_t := \theta_t S_t/X_t$ have the following dynamics:
\begin{align}
\frac{d X_t}{X_t} &= Y_t (\mu dt + \sigma dW_t)- \varepsilon |u_t| dt -\lambda u_t^2 dt,\label{dynamicsX}\\
dY_t &= (Y_t(1-Y_t)(\mu-Y_t \sigma^2)+u_t+ \varepsilon |u_t| Y_t + \lambda Y_t u_t^2)dt+ Y_t(1-Y_t) \sigma dW_t.\label{dynamicsY}
\end{align}

As without proportional transaction costs \cite{guasoni.weber.13}, linear price impact implies that the risky weight is no longer a control variable that can be specified freely by the investor. Instead, it becomes a state variable, for which only the drift rate can be influenced by applying the control $u$. To make this precise, fix a filtered probability space $(\Omega, \mathcal{F},(\mathcal{F})_{t\geq 0},\mathbb{P})$ supporting a Brownian motion $(W_t)_{t\geq 0}$, where $\mathcal{F}_t$ is the augmentation of the filtration generated by $W$. We then define strategies in terms of the control variable $(u_t)_{t\geq 0}$. To rule out doubling strategies, we focus on \emph{admissible} strategies with positive wealth process $X^u$:

\begin{mydef}
An \emph{admissible} strategy is an adapted process $(u_t)_{t\geq 0}$, which is square-integrable (i.e., $\int_0^T u_t^2 dt < \infty$ a.s.\ for all $T>0$) and such that~(\ref{dynamicsX}) has a unique strong solution on $[0,\infty)$ for any $Y_0 \in [0,1]$. For any such admissible strategy, the corresponding wealth process is\footnote{If $y_\ast:=\mu/\gamma \sigma^2 \in (0,1)$, i.e., the frictionless target portfolio prescribes neither leverage nor shortselling, then Lemma~\ref{ybound} shows that $Y_t$ takes values in $[0,1]$ almost surely for all $t$. In particular, $\int_0^T Y_t^2 dt < \infty$ so that the process $X^u$ is well defined in this case.}
\begin{equation*}
X_t^u = X_0 \exp{\left(\int_0^T \left(\mu Y_t -\frac{\sigma^2}{2} Y_t^2 -  \varepsilon |u_t| -\lambda u_t^2\right) dt + \int_0^T \sigma Y_t dW_t\right)}.
\end{equation*}
\end{mydef}

As in \cite{dumas.luciano.91,grossman.vila.92,grossman.zhou.93}, the representative investor has constant relative risk aversion and maximizes the growth rate of her expected utility from terminal wealth over a long horizon. Put differently, she maximizes the ``equivalent safe rate'', for which a full safe investment yields the same utility as investing optimally in the original market:

\begin{mydef}\label{longrun}
An admissible strategy $(u_t)_{t\geq 0}$ is called \emph{long-run optimal}, if it maximizes the \emph{equivalent safe rate}
\begin{equation}\label{deflongrun}
\esr_\gamma(u):=\liminf_{T\rightarrow \infty}\frac{1}{T}\log{\mathbb{E}\left[(X_T^{u})^{1-\gamma}\right]^{\frac{1}{1-\gamma}}}
\end{equation}
over all admissible strategies, where $0<\gamma\not=1$ denotes the investor's relative risk aversion.
\end{mydef}

\section{Main Results}\label{sec:main}

Our main results can be summarized as follows:
\begin{mythm}\label{mainresult}
An investor with constant relative risk aversion $0< \gamma \neq 1$ trades to maximize the equivalent safe rate~\eqref{deflongrun}, in the presence of a nontrivial bid-ask spread $\varepsilon$ and finite market depth~$1/\lambda$. Then, if $y_\ast:=\mu/\gamma\sigma^2 \in (0,1)$ and the trading costs $\varepsilon, \lambda$ are sufficiently small:
\begin{enumerate}
\item[i)] There exist constants $\beta \in [\max\{0,\mu-\gamma\sigma^2/2\},\mu^2/2\gamma\sigma^2]$ and $0 \leq y_- \leq y_+ \leq 1$, as well as a $C^1$-function $q: [0,1] \to \mathbb{R}$ which solve the ODE
\begin{equation}\label{eq:odeq}
\begin{split}
0&=-\beta + \mu y -\frac{\gamma \sigma^2}{2} y^2+ y(1-y) (\mu-\gamma \sigma^2 y) q + \frac{\sigma^2}{2}y^2(1-y)^2(q'+(1-\gamma)q^2)\\
&\qquad+\begin{cases}
\frac{1}{4 \lambda}\frac{(q-\varepsilon(1-y q))^2}{1-y q}, &\text{if } y\in [0,y_-],\\
 0, &\text{if } y\in[y_-,y_+],\\
\frac{1}{4 \lambda}\frac{(q+\varepsilon(1-y q))^2}{1-y q},  &\text{if } y\in [y_+,1],
\end{cases}
\end{split}
\end{equation}
with boundary conditions
\begin{align}
q(0^+)&= \varepsilon + 2 \sqrt{\lambda \beta},\label{abelini0}\\
\label{abelini}
q(1^-)&= \frac{\lambda d -\varepsilon (1-\varepsilon)- \sqrt{\lambda d (\lambda d -2 + 2 \varepsilon)}}{(1-\varepsilon)^2}, \quad \mbox{where } d := -\gamma \sigma^2- 2 \beta+ 2 \mu,\\
q(y_-) &= \frac{\varepsilon}{1+\varepsilon y_-},\label{abelend1}\\
q(y_+) &= \frac{-\varepsilon}{1-\varepsilon y_+}.\label{abelend}
\end{align}
\item[ii)]
A long-run optimal strategy $\hat{u}$ is to remain inactive while the corresponding risky weight lies in the no-trade region $[y_-,y_+]$, and to rebalance at the following wealth turnover rate if it does not:
\begin{equation*}
\hat{u}(y)= 
\begin{cases}
\frac{1}{2 \lambda}\left(\frac{q(y)}{1-y q(y)}-\varepsilon\right) \geq 0, &\text{if } y \in [0,y_-],\\
\frac{1}{2 \lambda}\left(\frac{q(y)}{1-y q(y)}+\varepsilon\right) \leq 0, &\text{if } y \in [y_+,1].\\
\end{cases}
\end{equation*}  
\item[iii)] The maximal equivalent safe rate is $\beta$.
\end{enumerate}
\end{mythm}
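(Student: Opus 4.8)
The plan is to prove this as a verification theorem: assume that the ODE system in part (i) has a solution $(\beta, y_-, y_+, q)$ with the stated properties, construct a candidate value function from $q$, and show via a martingale/supermartingale argument that $\beta$ is both an upper bound for $\esr_\gamma(u)$ over all admissible $u$ and is attained by $\hat u$. First I would guess the form of the value function. By the long-run/ergodic structure and the homotheticity of power utility, one expects
\begin{equation*}
V(x,y) = x^{1-\gamma} e^{-(1-\gamma)\beta t}\, w(y)
\end{equation*}
for some reduced function $w$ of the risky weight $y$ alone, and the substitution $q = w'/( (1-\gamma) w)$ (a logarithmic derivative, exactly as in \cite{guasoni.weber.13}) turns the HJB equation into the first-order Riccati-type ODE~\eqref{eq:odeq}. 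The three branches correspond to the pointwise maximization over the control $u$ in the HJB equation: inside $[y_-,y_+]$ the maximizer is $u=0$ (so the quadratic-cost term vanishes), while outside it the first-order condition in $u$ — which balances the linear marginal cost $\varepsilon$, the quadratic marginal cost $2\lambda u$, and the marginal benefit encoded in $q$ — gives precisely the turnover $\hat u(y)$ in part (ii); plugging $\hat u$ back in produces the $\frac{1}{4\lambda}(q\mp\varepsilon(1-yq))^2/(1-yq)$ terms. The boundary conditions~\eqref{abelini0}--\eqref{abelend} come from smooth fit at the free boundaries $y_\pm$ (continuity of $q$, with the branch expressions matching) and from requiring the solution to remain regular as $y\to 0^+$ and $y\to 1^-$, where the diffusion coefficient $y(1-y)\sigma$ degenerates.

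The core of the argument is then the following. Fix any admissible $u$ with risky weight process $Y^u$ taking values in $[0,1]$ (justified by Lemma~\ref{ybound} since $y_\ast\in(0,1)$). Using It\^o's formula on $\log\big((X^u_t)^{1-\gamma} w(Y^u_t)\big)$ together with the dynamics~\eqref{dynamicsX}--\eqref{dynamicsY}, and the ODE~\eqref{eq:odeq} together with the fact that $\hat u$ is the pointwise maximizer, one obtains that the process
\begin{equation*}
M^u_t := (X^u_t)^{1-\gamma} w(Y^u_t)\, e^{-(1-\gamma)\beta t}
\end{equation*}
is a local supermartingale when $\gamma < 1$ (and a local submartingale when $\gamma > 1$, the sign of $1-\gamma$ flipping all inequalities — this case-split must be carried through carefully), with equality — i.e. $M^{\hat u}$ a local martingale — for the candidate strategy. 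Since $w$ is continuous on the compact interval $[0,1]$ and bounded away from $0$ and $\infty$ (this needs $1-yq>0$ on $[0,1]$, which should be read off from the boundary data), a localization plus Fatou (for $\gamma<1$) or a uniform-integrability/truncation argument (for $\gamma>1$, where one instead bounds $\mathbb E[(X^u_T)^{1-\gamma}]$ from below and must be more delicate about the liminf) yields
\begin{equation*}
\limsup_{T\to\infty}\frac1T\log\mathbb E\big[(X^u_T)^{1-\gamma}\big]^{\frac1{1-\gamma}} \le \beta,
\end{equation*}
with equality along $\hat u$; this establishes parts (ii) and (iii) simultaneously. It remains to confirm admissibility of $\hat u$: that~\eqref{dynamicsX}--\eqref{dynamicsY} with $u=\hat u(Y_t)$ has a unique strong solution on $[0,\infty)$ and $\hat u$ is square-integrable — here one uses that $\hat u$ is Lipschitz on $[0,1]$ (being $C^1$ there, as $q$ is $C^1$ and $1-yq$ is bounded away from zero), that $\hat u$ vanishes on $[y_-,y_+]$ so the drift of $Y$ is well-behaved, and that $Y$ stays in $[0,1]$.

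I expect the main obstacle to be two-fold. First, the existence of a solution to the free-boundary problem in part (i): one must show that the shooting parameters $\beta, y_-, y_+$ can be chosen so that the solution of the Riccati ODE started from~\eqref{abelini0} at $0^+$ passes through the gluing conditions~\eqref{abelend1}, \eqref{abelend} at $y_-, y_+$ and hits~\eqref{abelini} at $1^-$ — a continuity/degree argument on the map from $(\beta,y_-,y_+)$ to the mismatch at the endpoints, exploiting monotonicity in $\beta$ and the a priori bracket $\beta\in[\max\{0,\mu-\gamma\sigma^2/2\},\mu^2/2\gamma\sigma^2]$ (the upper bound being the frictionless equivalent safe rate, the lower one a no-trade lower bound), plus ruling out blow-up of $q$ via the sign of $1-yq$. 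Second, the long-horizon limit for $\gamma>1$: there $(X^u_T)^{1-\gamma}$ is bounded so $\mathbb E[(X^u_T)^{1-\gamma}]$ could be small, and turning a sub/supermartingale bound into the correct inequality for the $\liminf$ defining $\esr_\gamma$ requires care — typically one shows $M^u$ is a genuine (sub/super)martingale rather than merely local, using the boundedness of $w$ and $Y$ and integrability estimates on $X^u$, and then reads off the bound directly. The rest — the It\^o computation reducing the HJB to~\eqref{eq:odeq}, verifying the pointwise maximization, and checking the boundary conditions encode smooth fit — is routine but lengthy algebra that I would relegate to the detailed proof in Section~\ref{sec:proofs}.
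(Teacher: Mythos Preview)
Your overall architecture is correct and matches the paper's in spirit, but the paper executes both halves differently, and in each case its choice sidesteps exactly the obstacle you flag.

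For the verification step, the paper does not work with the supermartingale $M^u_t=(X^u_t)^{1-\gamma}w(Y^u_t)e^{-(1-\gamma)\beta t}$ directly. Instead it performs a change of measure: with $Q(\xi)=\int_0^\xi q$, it defines $\hat{\mathbb P}^u$ via the density $\mathcal E\big(\int_0^\cdot (1-\gamma)Y^u_s(1+q(Y^u_s)(1-Y^u_s))\sigma\,dW_s\big)$ and proves the \emph{pathwise} inequality
\[
\log X^u_T-\log X_0-\tfrac{1}{1-\gamma}\log\tfrac{d\hat{\mathbb P}^u}{d\mathbb P}\ \le\ \beta T - Q(Y^u_T)+Q(y),
\]
which, after exponentiating and taking expectations, gives $\mathbb E[(X^u_T)^{1-\gamma}]^{1/(1-\gamma)}\le X_0 e^{\beta T+Q(y)}\mathbb E^{\hat{\mathbb P}^u}[e^{-(1-\gamma)Q(Y^u_T)}]^{1/(1-\gamma)}$. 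Novikov's condition holds because $Y^u$ and $q$ are bounded on $[0,1]$, so $\hat{\mathbb P}^u$ is a genuine probability measure; and since $Q$ is bounded, the long-run limit is immediate. The $1/(1-\gamma)$ power is built into the estimate from the start, so no case split between $\gamma<1$ and $\gamma>1$ ever arises and no localization or Fatou argument is needed. Your supermartingale route would work too, but the $\gamma>1$ delicacy you anticipate is real, and the paper's device simply avoids it.

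For the existence of $(\beta,y_-,y_+,q)$, the paper does \emph{not} shoot in three parameters. It writes the ODE globally as $q'=f(y,q)$ with $f$ defined piecewise according to whether $q$ lies above, between, or below the curves $q=\varepsilon/(1+\varepsilon y)$ and $q=-\varepsilon/(1-\varepsilon y)$; the free boundaries $y_\pm$ are then \emph{determined} a posteriori as the points where the solution crosses these curves. The only shooting parameter is $\beta$: one constructs solutions $q_0$ from $0^+$ and $q_1$ from $1^-$, shows $q_0>q_1$ for $\beta>\mu^2/2\gamma\sigma^2$ and $q_0<q_1$ for $\beta=\mu^2/2\gamma\sigma^2-c$ with small $\varepsilon,\lambda$, and invokes continuous dependence on $\beta$. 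A separate argument then shows the solution crosses each threshold curve exactly once. This one-parameter shooting is considerably simpler than the three-parameter degree argument you propose.
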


The constant $y_\ast=\mu/\gamma \sigma^2$ is the risky weight without frictions \cite{merton.69}. Thus, $y_\ast \in (0,1)$ means that the frictionless optimal strategy neither shorts nor levers the risky asset. As shown by Guasoni and Weber~\cite[Theorem 2.3]{guasoni.weber.13}, levered or short positions cannot be admissible with linear price impact for risk-averse investors, because they cannot be liquidated quickly enough to offset unfavorable diffusive price moves. This is only exacerbated by the additional linear trading cost. Hence, buy-and-hold strategies are optimal for $y_\ast \notin (0,1)$, as in \cite[Theorem 2.3]{guasoni.weber.13}:

\begin{myprop}\label{prop:hold}
Under the assumptions of Theorem \ref{mainresult}:%\footnote{{\color{red}Here we assume $\gamma>0$. In fact, a strategy that does not preclude bankruptcy cannot be optimal for a risk-averse CRRA investor. For risk-seeking investors, the optimal trading policy depends non-trivially on the trade-off between probability of default and speed of trading.}}:
\begin{itemize}
\item [(i)] If $\mu/\gamma \sigma^2 \leq 0$, then $Y_t=0$ and $\hat{u}_t=0$ for all $t$ is long-run optimal, and $\esr_\gamma(\hat{u})=0.$
\item [(ii)] If $\mu/\gamma \sigma^2 \geq 1$, then $Y_t=1$ and $\hat{u}_t=0$ for all $t$ is long-run optimal, and $\esr_\gamma(\hat{u})=\mu-\gamma\sigma^2/2.$
\end{itemize}
\end{myprop}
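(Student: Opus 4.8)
The plan is to prove the two cases by showing that the buy-and-hold strategies in question are admissible and that their equivalent safe rates match the known upper bounds for the frictionless problem, which dominate the frictional one since both frictions only subtract from wealth.

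First I would check admissibility. For case (i), with $Y_0 = 0$ and $\hat u \equiv 0$, the dynamics~\eqref{dynamicsY} give $dY_t = Y_t(1-Y_t)(\mu - Y_t\sigma^2)\,dt + Y_t(1-Y_t)\sigma\,dW_t$; since $Y \equiv 0$ is the unique strong solution with this initial condition (the coefficients vanish at $0$ and are locally Lipschitz there), the strategy is admissible, and~\eqref{dynamicsX} reduces to $dX_t/X_t = 0$, so $X_t^{\hat u} = X_0$ for all $t$ and hence $\esr_\gamma(\hat u) = 0$. For case (ii), with $Y_0 = 1$ and $\hat u \equiv 0$, the same reasoning gives $Y \equiv 1$ as the unique strong solution, and~\eqref{dynamicsX} becomes $dX_t/X_t = \mu\,dt + \sigma\,dW_t$, so $X_T^{\hat u} = X_0 \exp((\mu - \sigma^2/2)T + \sigma W_T)$; a direct computation of $\mathbb{E}[(X_T^{\hat u})^{1-\gamma}]^{1/(1-\gamma)}$ using the moment generating function of the normal distribution yields $\esr_\gamma(\hat u) = \mu - \gamma\sigma^2/2$.

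Next I would establish that these values are in fact optimal, i.e.\ no admissible strategy can do better. The key observation is that from the explicit form of $X_t^u$ in the Definition, dropping the nonpositive terms $-\varepsilon|u_t| - \lambda u_t^2$ yields the pointwise bound $X_t^u \le X_0 \exp(\int_0^T(\mu Y_t - \tfrac{\sigma^2}{2}Y_t^2)\,dt + \int_0^T \sigma Y_t\,dW_t)$, so the frictional equivalent safe rate is bounded above by its frictionless counterpart for the \emph{same} weight process $Y$. It therefore suffices to bound $\esr_\gamma$ over all admissible weight processes with values constrained as in the frictionless analysis. In case (i), since $\mu/\gamma\sigma^2 \le 0$ forces $\mu \le 0$, the integrand $\mu Y_t - \tfrac{\sigma^2}{2}Y_t^2$ is $\le 0$ whenever $Y_t \ge 0$, and Lemma~\ref{ybound} (cited in the footnote to the admissibility definition, guaranteeing $Y_t \in [0,1]$) pins down the sign; a standard argument via It\^o's formula applied to $(X_t^u)^{1-\gamma}$ and Jensen/optional-stopping then shows $\mathbb{E}[(X_T^u)^{1-\gamma}]^{1/(1-\gamma)} \le X_0$, hence $\esr_\gamma(u) \le 0$. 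In case (ii), $\mu/\gamma\sigma^2 \ge 1$ means $\mu \ge \gamma\sigma^2$, and one checks that the HJB-type bound from the frictionless Merton problem with the weight constrained to $[0,1]$ is attained at $Y \equiv 1$, giving the growth rate $\mu - \gamma\sigma^2/2$; again this follows by applying It\^o to $(X_t^u)^{1-\gamma}$, noting the drift of the resulting local martingale is dominated by $(1-\gamma)(\mu - \gamma\sigma^2/2)(X_t^u)^{1-\gamma}$ on $\{Y_t \in [0,1]\}$, and taking a suitable limit.

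The main obstacle is the upper-bound step, specifically handling the $\liminf_{T\to\infty}\tfrac1T\log(\cdot)$ rigorously: applying It\^o to $(X^u_t)^{1-\gamma}$ produces a local martingale whose localizing sequence must be controlled, and for $0 < \gamma < 1$ versus $\gamma > 1$ the direction of Jensen's inequality and the integrability considerations differ, so the argument has to be organized carefully (this is where one would invoke or adapt the corresponding estimate from~\cite[Theorem 3]{guasoni.weber.13}, whose proof treats exactly the $\varepsilon = 0$ version of this bound). Everything else — admissibility and the two explicit equivalent-safe-rate computations — is routine.
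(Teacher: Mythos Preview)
The paper does not give a self-contained proof of this proposition; it simply asserts that the result follows ``like in \cite[Theorem 3]{guasoni.weber.13}'', noting that additional proportional costs only make short or levered positions worse. Your proposal is therefore more detailed than anything in the paper, and it follows essentially the same line that the cited reference does: (a) verify the buy-and-hold strategies are admissible and compute their equivalent safe rates directly; (b) for the upper bound, drop the friction terms to compare with the frictionless wealth, then use the constraint $Y_t\in[0,1]$ (from Lemma~\ref{ybound}/\cite[Theorem 3]{guasoni.weber.13}) together with the pointwise inequality $\mu y-\tfrac{\gamma\sigma^2}{2}y^2\le 0$ (case (i)) resp.\ $\le \mu-\tfrac{\gamma\sigma^2}{2}$ (case (ii)) on $[0,1]$, and control the stochastic exponential via Novikov. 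Your identification of the localization/sign issues for $\gamma\lessgtr 1$ as the only real work is accurate.

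One point to tighten: you invoke Lemma~\ref{ybound} to get $Y_t\in[0,1]$, but in the paper that lemma sits under the standing hypothesis $y_\ast\in(0,1)$ of Section~\ref{sec:proofs}. The underlying argument (explosion of $Y$ outside $[0,1]$ forces $X^u\to 0$) does not actually use $y_\ast\in(0,1)$, so the lemma extends; but you should say so explicitly rather than cite it as-is. You already flag this implicitly when you say one ``would invoke or adapt'' \cite[Theorem 3]{guasoni.weber.13}; make that adaptation explicit and the argument is complete.
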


The objective function \eqref{deflongrun} uses paper wealth rather than the liquidation value of the portfolio. This seems more reasonable for long-run investments meant to run indefinitely, such as trust funds and university endowments. However, similarly as in~\cite[Lemma 2.4]{guasoni.weber.13} for purely quadratic costs, it can be shown that this choice is of little consequence.\footnote{In the proof of~\cite[Lemma 2.4]{guasoni.weber.13}, use (\ref{dynamicsX}-\ref{dynamicsY}) instead of~\cite[Equation (2.4) and (2.5)]{guasoni.weber.13}. The rest of the proof carries through unchanged.} To wit, assuming a constant best quote (which is justified if the portfolio is sold quickly), a policy of selling at a constant turnover rate completes the liquidation within a short period of time and at a small fraction of portfolio value. Formally:

\begin{mylemma}\label{liquidationvalue}
Let $S_t \equiv S$ be constant for $t \geq T$. Then, the liquidation time $L(u)= \inf \{t\geq 0: \theta_{T+t}=0$\} of the constant selling policy $u_t \equiv u <0,$ equals $$ L(u) = -\frac{\log\left(1+(\varepsilon |u|+ \lambda u^2)\frac{Y_T}{u} \right) }{\varepsilon |u|+ \lambda u^2} \sim -\frac{Y_T}{u}.$$ The corresponding relative liquidation cost is $$\frac{X_T-X_{T+L(u)}}{X_T}= \varepsilon Y_T-\lambda u Y_T.$$
\end{mylemma}

For instance, if\footnote{As shown in~\cite[Equation (2.12)]{guasoni.weber.13} the optimal turnover $u^{0,\lambda}$ without linear costs admits the following asymptotic expansion:
$$u^{0,\lambda}(y)= \sigma (\gamma/2)^{1/2} (y_*-y)\lambda^{-1/2}+ o(\lambda^{-1/2}).$$} $u=-\lambda^{-1/2}$ the liquidation time is less than $\lambda^{1/2}$ years, since $Y_T\in [0,1]$. Furthermore, the corresponding liquidation cost is less than $(\varepsilon+ \lambda^{1/2})$ times the terminal wealth. Using the estimation interval of $[10^{-3},10^{-7}]$ for $\lambda$ (cf.~\cite[Section 3.1]{guasoni.weber.13}) and assuming a liquid stock with a bid-ask spread of $10$ basis points yields a liquidation time between $0.08$ and $7.91$ days and a relative liquidation cost between $0.13 \%$ and $3.26 \%$. As the horizon increases, the impact of these small costs on the equivalent safe rate vanishes. Moreover, the short liquidation time supports the constant best quote assumption.

\section{Numerical Examples}\label{sec:numerics}

In this section, we investigate the properties of the optimal rebalancing policy from Theorem~\ref{mainresult} in some numerical examples. This also allows us to assess the quality of the asymptotics established in Section~\ref{sec:asymptotics}, which turns out to be excellent.

\begin{figure}
\subfigure{\includegraphics[width=0.49\textwidth]{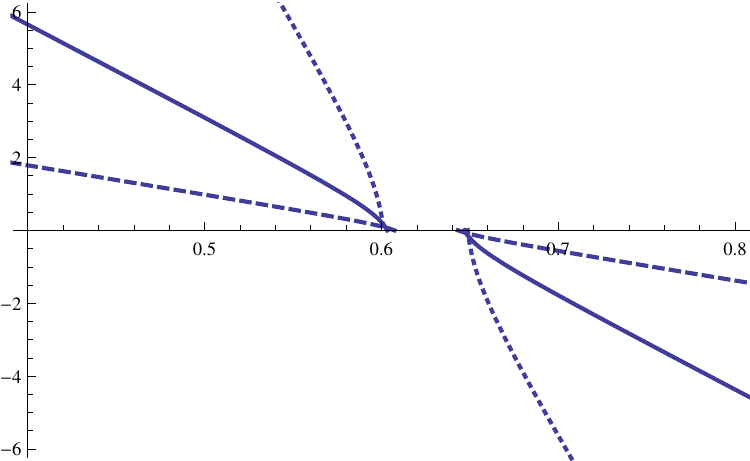}}\hfill
\subfigure{\includegraphics[width=0.49\textwidth]{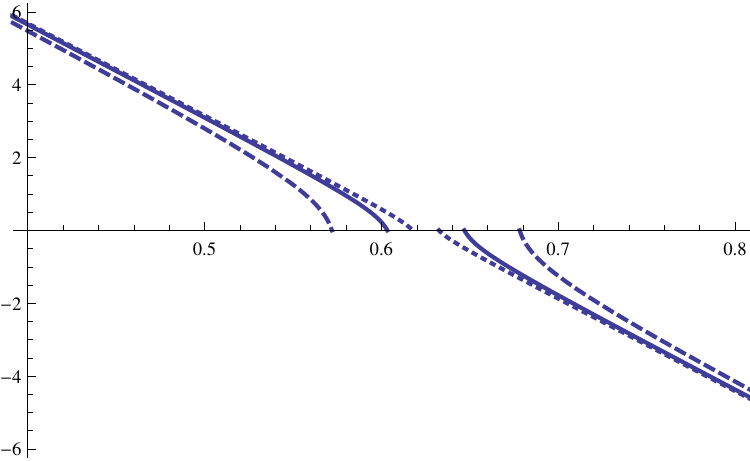}}
\caption{left panel: optimal wealth turnover $\hat{u}$ (Dotted: $\lambda=0.001\%$, Solid: $\lambda=0.01\%$, Dashed: $\lambda=0.1\%$) against the risky weight $Y$ with $\varepsilon=0.1\%$ fixed. Right panel: wealth turnover $\hat{u}$ (Dotted: $\varepsilon=0.01\%$, Solid $\varepsilon=0.1\%$, Dashed: $\varepsilon=1\%$) against the risky weight $Y$ with $\lambda=0.01\%$ fixed. Model parameters are $\mu = 8\%$, $\sigma = 16\%$, and $\gamma = 5$. }
\label{fig:tradingrates}
\end{figure}

Figure \ref{fig:tradingrates} displays the optimal policy for various trading costs. The left panel shows how the turnover rate and the corresponding no-trade region depend on the price impact parameter $\lambda$. As the latter decreases, turnover quickly increases near the boundary of the no-trade region, converging to the singular controls (``pushing at an infinite rate'') applied there with only proportional costs. For higher price impact costs, the optimal trading rate is almost linear in the deviation from the trading boundaries. Moreover, the width of the no-trade region decreases in this case, as investors start trading earlier to compensate for the slower trading rate at the boundary. However, the size of this effect is quite small, i.e., the width of the optimal no-trade region is relatively insensitive to the quadratic costs. 

The right panel in Figure \ref{fig:tradingrates} plots the trading rate for different widths $\varepsilon$ of the bid-ask spread. As the latter decreases, the no-trade region shrinks to zero and the optimal policy converges to the one with price impact only, i.e., rebalancing at a rate essentially proportional to the deviation from the frictionless Merton portfolio \cite{guasoni.weber.13}. For larger spreads, the no-trade regions widens quickly, and the optimal rebalancing rate increases much faster near the trading boundaries than further away from these.\footnote{There, it grows according to the asymptotic formula~\eqref{eq:asymp2} corresponding to a model with only quadratic costs.}

In summary, the optimal policy prescribes to i) start trading earlier than with only proportional costs, and ii) rebalance slower than with only price impact. The turnover rate increases faster near the trading boundaries; further way from these, it approaches its counterpart for only quadratic costs. 

As a complement, the quality of the small-cost asymptotics derived in Section \ref{sec:asymptotics} is assessed in Figure~\ref{fig:asymptotics}. There, we compare the optimal turnover rate to its asymptotic expansion~\eqref{eq:hatu} for two combinations of trading costs. Even for unrealistically large frictions, the approximations provide an excellent fit. Hence, the computational load can be eased to finding the root of a single scalar function with little loss in accuracy. 

\begin{figure}
\subfigure{\includegraphics[width=0.49\textwidth]{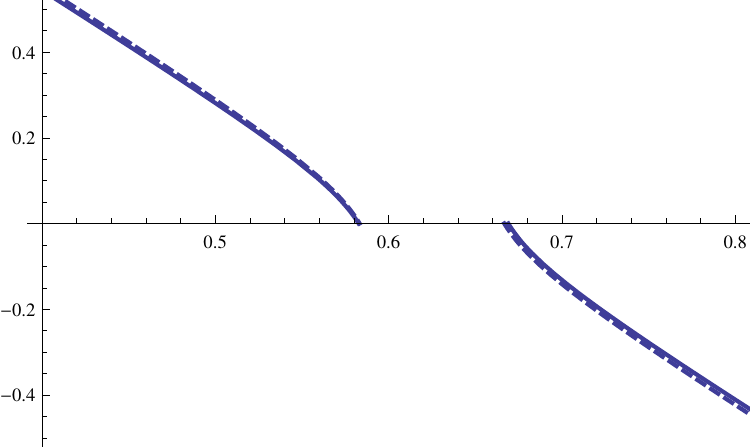}}\hfill
\subfigure{\includegraphics[width=0.49\textwidth]{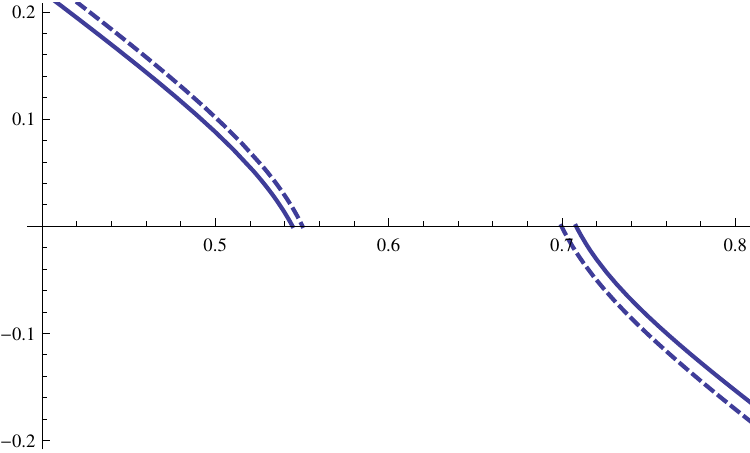}}
\caption{optimal wealth turnover $\hat{u}$ from Theorem \ref{mainresult} (solid) and its asymptotic expansion \eqref{eq:hatu} (dashed), for $\varepsilon=\lambda=1\%$ (left panel) and $\varepsilon=\lambda=5\%$ (right panel). Model parameters are $\mu = 8\%$, $\sigma = 16\%$, and $\gamma = 5$. }
\label{fig:asymptotics}
\end{figure}

% Figure~\ref{fig:turnovernearzm} depicts the exact optimal turnover from Theorem~\ref{mainresult}, the approximation~\eqref{asymptoticsepsfix} for small price impact and the piecewise linear approximation obtained by pasting together the two limiting regimes~\eqref{eq:hatuapproxnearzm} and~\eqref{eq:asymp2} for the joint asymptotics with small proportional and quadratic costs. Both approximations turn out to be virtually indistinguishable and very accurate away from the no-trade. Near the latter, wealth turnover is bounded
%from above by~\eqref{eq:hatuapproxnearzm} and from below by~\eqref{asymptoticsepsfix}\footnote{The approximation~\eqref{asymptoticsepsfix} is not a global lower bound, though, since it exceeds the optimal wealth turnover as $y \downarrow 0$.}. 
%
%
%

\section{Heuristics}\label{heuristic}

In this section, we use arguments from stochastic control to heuristically derive a candidate solution for the long-run problem~\eqref{deflongrun}. To this end, consider the maximization of expected power utility $U(x) = x^{1-\gamma}/(1-\gamma)$ from terminal wealth at time $T>0$. Denote by $V(t,X_t,Y_t)$ the corresponding value function, which is assumed to depend on the current wealth $X_t$, the current risky weight $Y_t$, and time $t$. For any given strategy $u$, It\^o's formula yields:
\begin{align*}
d V(t, X_t, Y_t) = & V_t dt + V_x d X_t + V_y d Y_t + \frac{V_{xx}}{2} d \langle X \rangle _t+ \frac{V_{yy}}{2}d \langle Y \rangle _t + V_{xy}d \langle X,Y \rangle_t\\
 = & V_t dt + V_x(\mu X_t Y_t -\varepsilon X_t |u_t| -\lambda X_t u_t^2) dt + V_x X_t Y_t \sigma dW_t\\
  &+ V_y (Y_t(1-Y_t)(\mu-Y_t\sigma^2)+u_t +\varepsilon Y_t |u_t|+ \lambda Y_t u_t^2) dt + V_y Y_t(1-Y_t) \sigma d W_t\\
  &+\left(\frac{\sigma^2}{2}V_{xx}X_t^2 Y_t^2+\frac{\sigma^2}{2} V_{yy}Y_t^2(1-Y_t)^2 + \sigma^2 V_{xy}X_t Y_t^2(1-Y_t)\right)dt.
\end{align*}
By the martingale optimality principle of stochastic control, the value function $V(t,X_t,Y_t)$ must be a supermartingale for any admissible strategy, and a martingale for the optimal one. That is, the drift of $V(t,X_t,Y_t)$ cannot be positive and must become zero for the optimizer. This leads to the Hamilton-Jacobi-Bellman (henceforth HJB) equation:
\begin{align*}
0 =& V_t + y(1-y)(\mu-\sigma^2 y)V_y + \mu x y V_x + \frac{\sigma^2 y^2}{2}(x^2 V_{xx}+(1-y)^2 V_{yy}+2 x (1-y) V_{xy})\\
  & + \max_{u}{\left(-\lambda x u^2 V_x-\varepsilon x |u| V_x+ V_y (u+\varepsilon y |u|+ \lambda y u^2)\right)}.
\end{align*}
The homotheticity $U(x)=x^{1-\gamma}U(1)$ of the power utility function and the conjecture that -- in the long run -- utility should grow at a constant exponential rate motivate the following ansatz for the long-run value function:
\begin{equation}\label{reducedvaluefunction}
V(t,x,y) = \frac{x^{1-\gamma}}{1-\gamma}e^{(1-\gamma)(\beta (T-t)+\int_{p}^y q(z) dz)}.
\end{equation}
Note that the function $q$ is defined up to some arbitrary $p$. This definition of $V$ leads to the long-run version of the HJB-equation:
\begin{align}
0 =& -\beta + \mu y -\frac{\gamma \sigma^2}{2} y^2+ y(1-y)(\mu-\gamma \sigma^2 y)q + \frac{\sigma^2}{2}y^2(1-y)^2(q'+(1-\gamma)q^2)\nonumber\\
  & +\max_{u}(-\lambda u^2 -\varepsilon |u| + (u +\varepsilon |u| y + \lambda y u^2) q).
\end{align}
Decomposing wealth turnover $u$ into purchase and sale turnover, i.e., $u = u^{+}-u^{-}$, the HJB equation reduces to
\begin{align}
0 =& -\beta + \mu y -\frac{\gamma \sigma^2}{2} y^2+ y(1-y)(\mu-\gamma \sigma^2 y)q + \frac{\sigma^2}{2}y^2(1-y)^2(q'+(1-\gamma)q^2)\nonumber\\
   &+\max_{u^{+} \geq 0}(-\lambda (u^{+})^2 -\varepsilon u^{+} + (u^{+} +\varepsilon u^{+} y + \lambda y (u^{+})^2) q)\nonumber\\
  &+\max_{u^{-}\geq 0}(-\lambda (u^{-})^2 -\varepsilon u^{-} + (-u^{-} +\varepsilon u^{-} y + \lambda y (u^{-})^2) q).\label{optimaluhjb}
\end{align}
Suppose the ``second-order condition'' $q(y) y <1$ is satisfied (this holds for the function $q$ constructed in Lemma \ref{odesolution}). Then, the maxima are attained at
\begin{align}
u^{+}(y) &=\max\left(\frac{1}{2\lambda}\left(\frac{q(y)}{(1-y q(y))}- \varepsilon\right),0\right),\label{optbuyingrate}\\
u^{-}(y) &= \max\left(-\frac{1}{2\lambda}\left(\frac{q(y)}{(1-y q(y))}+ \varepsilon\right),0\right).\label{optsellingrate}
\end{align}
The optimizer with proportional transaction costs but without price impact is characterized by a no-trade interval around the frictionless optimum $y_\ast= \mu/\gamma \sigma^2$. Hence, we conjecture that the no-trade region in the present setting, 
\begin{equation*}
\Bigl\{y: u^+ (y) = u^-(y) =0 \Bigr\}=\Bigl\{y: -\varepsilon< \frac{q(y)}{1-y q(y)} < \varepsilon \Bigr\},
\end{equation*}
is also given by some interval $[y_-,y_+]$. Substituting the optimal turnover rates~(\ref{optbuyingrate}-\ref{optsellingrate}) back into~\eqref{optimaluhjb}, the HJB equation in turn simplifies to the ODE \eqref{eq:odeq}. Imposing continuity across the boundaries $y_-, y_+$ of the no-trade region in turn yields (\ref{abelend1}-\ref{abelend}). Since the differential equation~\eqref{eq:odeq} is of order one and there are four unknowns to be determined ($\beta, y_-, y_+$, and $q$), the value matching conditions~(\ref{abelend1}-\ref{abelend}) are not sufficient to characterize the solution. As a way out, we add two additional boundary conditions that become active when the investor's portfolio approaches full safe ($Y_t=0$) or full risky investment ($Y_t=1$). The idea is that the trading rate (\ref{optbuyingrate}-\ref{optsellingrate}) should remain finite in each case; moreover, it should be positive at $Y_t=0$ and negative at $Y_t=1$ so as to keep the risky weight in $[0,1]$.\footnote{Recall that this is needed 
to ensure solvency, since portfolios involving short or levered positions lead to bankruptcy with positive probability.} Solving the ODE \eqref{eq:odeq} at $y\in \{0,1\}$ leads to a quadratic equation for the boundary value of $q$; choosing the solution with the correct sign in turn gives
\begin{align}
q(0^+)&= \varepsilon + 2 \sqrt{\lambda \beta},\label{initial condition}\\
q(1^-)&= \frac{\lambda d -\varepsilon (1-\varepsilon)- \sqrt{\lambda d (\lambda d -2 + 2 \varepsilon)}}{(1-\varepsilon)^2}, \quad d = -\gamma \sigma^2- 2 \beta+ 2 \mu.\label{terminal condition}
\end{align}
 Together with the value matching conditions (\ref{abelend1}-\ref{abelend}), this yields the representation from Theorem~\ref{mainresult}. For $y_\ast \in (0,1)$, this informal derivation indeed leads to the correct answer (cf.\ the rigorous verification theorem in Section \ref{sec:proofs}). For $y_\ast \notin (0,1)$, however, the candidate risky weight explodes with positive probability, and the massive rebalancing that comes along with this reduces the corresponding wealth to zero (cf.\ Lemma \ref{ybound}). Hence, our candidate strategy is not even admissible in this case, and a simple buy-and-hold strategy turns out to be optimal instead (cf.\ Proposition \ref{prop:hold}). This stresses the need for rigorous verification theorems to complement heuristic considerations, which might otherwise lead to wrong results.

\section{Asymptotic Results}\label{sec:asymptotics}

The differential equation~\eqref{eq:odeq} in Theorem \ref{mainresult} is of Abel type; no explicit solution is known. However, it is possible to obtain asymptotic expansions for the no-trade region $[y_-,y_+]$, the trading rate $\hat{u}$, and the corresponding equivalent safe rate $\beta$ as the market frictions tend to zero. Then, the computation of the optimal policy and welfare is simplified from the solution of a nonlinear free-boundary value problem to finding the root of a nonlinear function, similarly as in Korn \cite{korn.98} for proportional and fixed costs. 

Consider the limiting regime where both the proportional transaction cost $\varepsilon$ \emph{and} the price impact parameter $\lambda$ tend to zero. If these frictions are considered separately, their leading-order impact on the equivalent safe rate is of order $\varepsilon^{2/3}$ and $\lambda^{1/2}$, respectively (cf.~\cite[Formula (2.7)]{gerhold.al.14} resp.~\cite[Formula (2.13)]{guasoni.weber.13}). To obtain an expansion in which neither friction vanishes, we rescale them appropriately to put their asymptotic contributions on the same scale:%\footnote{Compare \cite{korn.98} for proportional and fixed costs.}
\begin{equation}\label{rescaling}
\lambda=K\varepsilon^{4/3}, \quad \mbox{where } K>0.
\end{equation}
With this rescaling, the joint impact of the two frictions on the equivalent safe rate is of order $\varepsilon^{2/3}$ as well:
\begin{myprop}\label{limit_esr}
Assume $\lambda=K\varepsilon^{4/3}$ and define $c(\varepsilon):=\frac{\mu^2}{2\gamma\sigma^2}-\esr_\gamma(\hat u)$, where $\hat u$ is the long-run optimal strategy defined in Theorem~\ref{mainresult}. Then
\begin{equation*}
0<\liminf_{\varepsilon\rightarrow0}\frac{c(\varepsilon)}{\varepsilon^{2/3}}\leq\limsup_{\varepsilon\rightarrow0}\frac{c(\varepsilon)}{\varepsilon^{2/3}}<+\infty.
\end{equation*}
\end{myprop}

\begin{proof}
See Section~\ref{proof:prop_asymptotics}.
\end{proof}

The problem of finding the asymptotic expansion splits into two regions: close to the no-trade region\footnote{For small transaction costs -- without market price impact -- the no-trade region $[y_-,y_+]$ contains the Merton proportion $y_\ast$, its width is of order $\varepsilon^{1/3}$, and the corresponding welfare effect is of order $\varepsilon^{2/3}$, see~\cite{janecek.shreve.04}. With price impact, the no-trade region is again an interval, which contains the Merton proportion provided that $\varepsilon$ is not too small compared to $\lambda$ (cf.~ the discussion after Lemma~\ref{odesolution}).} and far away from it. To study the first regime, one can argue using homogenization techniques as in Soner and Touzi~\cite{soner.touzi.13} (see also~\cite{possamai.al.13,altarovici.al.13,moreau.al.14}). To this end, first derive the HJB equation for the value function $V$ and postulate an appropriate asymptotic expansion. Then, substitute the expansion back into HJB equation, and collect the leading-order terms. This in turn leads to the so-called \emph{corrector} equation. In the pure transaction cost case, the appropriate ansatz for $V$ is
\begin{equation}\label{homoansatz}
V(t,x,y) = V_0(t,x)-\varepsilon^{2/3} u(t,x) -\varepsilon^{4/3} w(t,x,z)+ \mathcal{O}(\varepsilon),
\end{equation}
with the fast variable $z= \frac{y-y_*}{\varepsilon^{1/3}}.$ Here, $V_0$ denotes the frictionless value function, $x$ the total wealth, and $y$ the wealth invested in the risky asset. Note that the HJB equation for $V$ is of second order. Hence, taking the second derivative of the function $w$ and multiplying it with $\varepsilon^{4/3}$ produces again a term which is of order $\varepsilon^{2/3}$ justifying the power $4/3$ in the expansion.

Our long-run objective forces us to work with the reduced value function $q$, i.e.,
\begin{equation}\label{valuefunctionheuristic}
V(t,x,y) = \frac{x^{1-\gamma}}{1-\gamma}e^{(1-\gamma)(\beta (T-t)+\int_{p}^y q(w) dw)}, \quad  p\in \mathbb{R}.
\end{equation} 
Whence, the homogenization approach has to be adapted as follows. Define
\begin{align}\label{transformationzqb}
z:=\frac{y-y_\ast}{\varepsilon^{1/3}}, \quad q(y):= \varepsilon \times
\begin{cases}
 r_B(z), &y \in [0,y_-],\\
 r(z), &y \in [y_-,y_+],\\
 r_S(z),  &y \in [y_+,1],
\end{cases}
\quad
\beta:=\frac{\mu^2}{2\gamma\sigma^2}-\varepsilon^{2/3}l,
\end{align}
for constants  $0 \leq y_- \leq y_+ \leq 1$, $l>0$, and functions $r_B,r,r_S$ to be determined.  

To find the boundaries of the no-trade region $y_-$ and $y_+$ we need two more conditions. To this end, we consider the limit behavior of $q$ as the risky weight $y$ approaches the full investment levels $0$ and $1$. Matching the terms at the leading order $\varepsilon^{2/3}$ we define $q(y)=\varepsilon^{2/3}q^*(y)$. As $\varepsilon \downarrow 0$, the ODE~\eqref{eq:odeq} then simplifies to
\begin{equation}\label{growthconditionr}
q^*(y)=\left(2K\gamma\sigma^2 \right)^{1/2}(y_\ast-y).
\end{equation}
This equation describes the behavior of $q^*$ (and in turn $q$) away from the no trade region and therefore yields the required conditions to determine $y_-$ and $y_+$.\footnote{A similar ``pasting'' of two different expansions is used in \cite{chen.dai.13} to deal with a small capital gains tax.}

As $\varepsilon\downarrow 0$, the transformation~\eqref{transformationzqb} reduces the ODE~\eqref{eq:odeq} to an inhomogeneous Riccati equation:
\begin{align}
0 &=
-\frac{\gamma\sigma^2}{2}z^2+l
+\frac{\sigma^2}{2}y_\ast^2(1-y_\ast)^2r'_B+\frac{1}{4K}(r_B-1)^2,\quad  z \in(-\infty,z_-],\label{leftr}\\
0 &= 
-\frac{\gamma\sigma^2}{2}z^2+l
+\frac{\sigma^2}{2}y_\ast^2(1-y_\ast)^2r', \qquad  \qquad \qquad \quad z \in[z_-,z_+],\label{midr}\\
0 &= 
-\frac{\gamma\sigma^2}{2}z^2+l
+\frac{\sigma^2}{2}y_\ast^2(1-y_\ast)^2r'_S+\frac{1}{4K}(r_S+1)^2,\quad  z \in[z_+,+\infty),\label{rightr}
\end{align}
where the rescaled buying and selling boundaries $z_-$ and $z_+$ are determined by the following value matching conditions:
\begin{align}\label{valuematchat1}
r_B(z_-) =r(z_-)= 1, \quad r_S(z_+) =r(z_+) = -1.
\end{align}
As $z \rightarrow -\infty$ (resp. $z \rightarrow \infty$), the function $r_B$ (resp. $r_S$) must diverge with the same rate as in our first expansion in \eqref{growthconditionr}:
\begin{equation}\label{rlminusinfty}
\lim_{z \rightarrow -\infty} \frac{r_B(z)}{-\sqrt{2 K \gamma \sigma^2} z}=1 \quad \text{and} \quad \lim_{z \rightarrow \infty} \frac{r_S(z)}{-\sqrt{2 K \gamma \sigma^2} z}=1.
\end{equation}
In summary, this leads to the ODE~(\ref{leftr}-\ref{rightr}) with value matching and growth conditions~(\ref{valuematchat1}-\ref{rlminusinfty}) -- the corrector equation in the present setting.

Inserting~\eqref{growthconditionr} into~\eqref{valuefunctionheuristic} with the expansion for the equivalent safe rate in~\eqref{transformationzqb} yields that:
\begin{align}
V(t,x,y) &= V_0(t,x)\cdot e^{(1-\gamma)(-\varepsilon^{2/3}l\cdot(T-t)+ \varepsilon^{2/3} \int_{p}^y q^*(w) dw)}\nonumber\\
 &= V_0(t,x) -V_0(t,x)(1-\gamma)(T-t)l \cdot \varepsilon^{2/3}\nonumber\\
 & \quad + V_0(t,x)(1-\gamma) \int_{p}^y q^*(w) dw \cdot \varepsilon^{2/3} + o(\varepsilon^{2/3}). \label{valuefunctionansatz}
\end{align} 

The next proposition (proved in Section~\ref{proof:prop_asymptotics}) shows that the ODE~(\ref{leftr}-\ref{rightr}) with boundary conditions~(\ref{valuematchat1}-\ref{rlminusinfty}) admits a unique solution. As a side product, we show that additional price impact decreases the width of the no-trade region compared to the pure linear-cost case.

\begin{myprop}\label{prop_asymptotics}
There are unique $z_-$, $z_+$ and $l_*>0$ such that equation (\ref{leftr}-\ref{rightr}) has a solution that satisfies the value matching conditions~\eqref{valuematchat1} and the growth conditions \eqref{rlminusinfty}. In particular,
\begin{equation*}
l_*>\max\left(\sqrt{\frac{\gamma K}{2}}\sigma^3y_\ast^2(1-y_\ast)^2,\left(\frac{3}{4}\sqrt{\frac{\gamma}{2}}\sigma^3y_\ast^2(1-y_\ast)^2\right)^{2/3}\right).
\end{equation*}
Also, if $z^0_-$ and $z^0_+$ are the boundaries of no-trade region without price impact (see ~\cite[Formula (2.9)]{gerhold.al.14}), then $z_->z^0_-$ and $z_+<z^0_+$.
\end{myprop}

\begin{proof}
See Section~\ref{proof:prop_asymptotics}.
\end{proof}

We have shown with formal arguments that the appropriately rescaled value function $q(y)$ in Theorem~\ref{mainresult} converges to the solution $r(z)$ of equation (\ref{leftr}-\ref{rightr}) with growth conditions \eqref{rlminusinfty}. The next proposition makes these heuristic arguments rigorous. 

\begin{myprop}\label{prop_convergence}
Assume $\lambda=K\varepsilon^{4/3}$. Let $q^\varepsilon(y)$ be the solution to equation~\eqref{eq:odeq} as in Theorem~\ref{mainresult}. Then $\varepsilon^{-1}q^\varepsilon(\varepsilon^{1/3}z+y_*)$ converges to the solution of equation (\ref{leftr}-\ref{rightr}) defined in Proposition~\ref{prop_asymptotics}. In particular, $\lim_{\varepsilon\rightarrow0}\frac{c(\varepsilon)}{\varepsilon^{2/3}}=l_*$, where $c(\varepsilon):=\frac{\mu^2}{2\gamma\sigma^2}-\esr_\gamma(\hat u)$.
\end{myprop}

\begin{proof}
See Section~\ref{proof:prop_asymptotics}.
\end{proof}

To find the explicit solution of the ODE~(\ref{leftr}-\ref{rightr}) with boundary conditions~(\ref{valuematchat1}-\ref{rlminusinfty}) we proceed as follows. A simple calculation shows that the linear, inhomogeneous ODE~\eqref{midr} has the explicit solution
\begin{equation*}
r (z) = \frac{2}{\sigma^2 y_\ast^2 (1-y_\ast)^2}\left(\frac{\gamma \sigma^2}{6} z^3 - l z\right).
\end{equation*}  
Since $r$ is an odd function, the boundary conditions $r(z_-)=1$ and $r(z_+)=-1$ can be replaced by $r(z_-)=1$ and $r(0)=0$. Then, $z_-=-z_+$ and it suffices to determine $z_-$ because the constant $l$ is linked to $z_-$ via the condition $r(z_-)=1$:
\begin{equation}\label{eq:l}
l(z_-) = \frac{\gamma \sigma^2}{6} z_-^2-\frac{\sigma^2 y_\ast^2(1-y_\ast)^2}{2 z_-}.
\end{equation}  
To find the rescaled trading boundary $z_-$ determined by $r_B(z_-)=1$, the Riccati equation~\eqref{leftr} with boundary condition~\eqref{rlminusinfty} needs to be solved first. 

 Equation~\eqref{leftr} is -- up to transformations~\cite[Transformations 1.25, 1.105, 2.220, and 2.273, Formula (9)]{kamke.77} -- equivalent to a Whittaker equation for which explicit solutions (in terms of the Whittaker functions) are known (cf.~\cite[Section 1]{slater.60}). For brevity, we omit the derivation of the solution and simply state the final result:
\begin{equation}\label{solutionrl}
r_B (z,l) := -\frac{1}{z}\left(\frac{1}{2a}+ \frac{c}{2 a} \sqrt{\frac{1}{2 K \gamma \sigma^2}}\right)+1 +\sqrt{2 K \gamma \sigma^2} z-\frac{2}{a z} \frac{W\left(k+1,-1/4,a \sqrt{2K \gamma \sigma^2} z^2\right)}{W\left(k,-1/4,a \sqrt{2K \gamma \sigma^2} z^2\right)},
\end{equation}
where
\begin{align*}
a = \frac{1}{2 K \sigma^2 y_\ast^2 (1-y_\ast)^2}, \quad c = \frac{2 l}{\sigma^2 y_\ast^2 (1-y_\ast)^2}, \quad
k = \frac{c}{4} \sqrt{\frac{1}{2 K \gamma \sigma^2}},
\end{align*}
and the Whittaker function $W$ is defined via the Kummer function\footnote{\label{kummer}For $b\not= 0,-1,-2,\cdots,$ the Kummer function$\,_1F_1 (a,b,x)$ is defined through the following absolute convergent series \cite[Chapter 1]{slater.60}:
\begin{align*}
\,_1F_1 (a,b,x):= \sum_{n=0}^{\infty} \frac{a^{(n)}x^n}{b^{(n)}n!}.
\end{align*} 
Here, the Pochhammer symbol $a^{(n)}$ is given by $a^{(n)}:= a(a+1)(a+2) \cdots (a+n-1)$. For $a\not= 0,-1,-2,\cdots,$ we can write $a^{(n)}= \Gamma{(a+n)}/\Gamma{(a)}$, where $\Gamma(x)$ denotes the Gamma function.} $\,_1F_1 (a,b,x)$ and the Tricomi function $U(\xi,\eta,x)$ cf.~\cite[Chapter 1]{slater.60}:
\begin{align*}
U(\xi,\eta,x):= &\frac{\Gamma(1-\eta)}{\Gamma(1+\xi-\eta)} \,_1F_1(\xi,\eta,x)+\frac{\Gamma(\eta-1)}{\Gamma(\xi)} x^{1-\eta} \,_1F_1(1+\xi-\eta,2-\eta,x),\\
W(k,m,x) :=& x^{\frac{1}{2}+m}e^{-\frac{1}{2}x} U\left(1/2+m-k,1+2m,x\right).
\end{align*}
We note that the general solution of the Whittaker equation is given by a linear combination of the Whittaker functions\footnote{Note that the Whittaker function $W$ (cf.~\cite[Formula 1.7.1]{slater.60}) can be written as 
\begin{align*}
W(k,m,x) &= \frac{\pi}{\sin(2 m \pi)}\left(\frac{-M(k,m,x)}{\Gamma(1/2-m-k)\Gamma(1+2m)}+\frac{M(k,-m,x)}{\Gamma(1/2+m-k)\Gamma(1-2m)}\right).
\end{align*}
} $M(k,m,x)$ and $M(k,-m,x)$ (cf.~\cite[Section 16]{whittaker.watson.96}) with 
\begin{align*}
M(k,m,x) &:= x^{\frac{1}{2}+m}e^{-\frac{1}{2}x} \,_1F_1 \left(1/2+m-k,1+2m,x\right).
\end{align*}
However, the Riccati equation needs to be solved with the initial condition~\eqref{rlminusinfty}. Therefore, the Whittaker function $W(k,m,x)$ is the only candidate, since it has the correct asymptotic growth~\cite[Section 16.31]{whittaker.watson.96}:
\begin{equation}\label{whittakerasymptotic}
W(k,m,x) \sim x^{k}e^{-\frac{1}{2}x}, \quad \text{as} \quad x\rightarrow \infty.
\end{equation}
Indeed, for any $l\in \mathbb{R}$, \eqref{solutionrl} satisfies the boundary condition~\eqref{rlminusinfty}:
\begin{align*}
\lim_{z \rightarrow -\infty} \frac{r_B(z,l)}{-\sqrt{2 K \gamma \sigma^2} z} &= \lim_{z \rightarrow -\infty} \frac{1}{-\sqrt{2 K \gamma \sigma^2} z} \left(-\frac{1}{z}\left(\frac{1}{2a}+ \frac{c}{2 a} \sqrt{\frac{1}{2 K \gamma \sigma^2}}\right)+1\right)\\
 &  + \frac{\sqrt{2 K \gamma \sigma^2}}{-\sqrt{2 K \gamma \sigma^2} z}  \left(z-\frac{2 z}{a \sqrt{2 K \gamma \sigma^2} z^2} \frac{W\left(k+1,-1/4,a \sqrt{2K \gamma \sigma^2} z^2\right)}{W\left(k,-1/4,a \sqrt{2K \gamma \sigma^2} z^2\right)}\right)\\
  &= 1,
\end{align*}
where we have used~\eqref{whittakerasymptotic} in the last step.

Now, we can put everything together. $z_-$ is determined by $r_B(z_-,l(z_-))=1$ with $l(z_-)$ from \eqref{eq:l}. Hence,  the asymptotic expansions for the growth rate $\beta$ and the boundaries $y_-,y_+$ of the no-trade region in \eqref{transformationzqb} are given by:
\begin{align*}
\beta &= \frac{\mu^2}{2 \gamma \sigma^2}- \left(\frac{\gamma \sigma^2}{6} z_-^2-\frac{\sigma^2 y_\ast^2 (1-y_\ast)^2}{2 z_-}\right) \varepsilon^{\frac{2}{3}}+ \mathcal{O}(\varepsilon),\\
y_{\mp} &= y_\ast \pm z_- \varepsilon^{\frac{1}{3}}+ \mathcal{O}(\varepsilon^{\frac{2}{3}}),
\end{align*}
where $z_-$ is the root of the equation $r_B(z,l(z)) =1$ with $l(z)$ from \eqref{eq:l}. The asymptotics for the turnover rate $\hat{u}$ can be derived similarly. Recall from Theorem~\ref{mainresult} that
\begin{equation*}
\hat{u}(y)=
\begin{cases}
\frac{1}{2 \lambda}\left(\frac{q(y)}{1-y q(y)}-\varepsilon\right) \geq 0, &\text{if } y \in [0,y_-],\\
0, &\text{if } y \in [y_-,y_+],\\
\frac{1}{2 \lambda}\left(\frac{q(y)}{1-y q(y)}+\varepsilon\right) \leq 0, &\text{if } y \in [y_+,1].\\
\end{cases}
\end{equation*}  
Using the same transformations as in~\eqref{transformationzqb} for $y$ and $q$, a straightforward calculation shows that 
\begin{equation}\label{eq:hatu}
\hat{u}(y) = 
\begin{cases}
\frac{1}{2K} \left(r_B(z)-1\right)\varepsilon^{-\frac{1}{3}}+ \mathcal{O}(1), &\text{if } z = (y-y_\ast) \varepsilon^{-1/3} \in (-\infty,z_-],\\
0, &\text{if } z = (y-y_\ast) \varepsilon^{-1/3} \in [z_-,z_+],\\
\frac{1}{2K} \left(r_S(z)+1\right)\varepsilon^{-\frac{1}{3}}+ \mathcal{O}(1), &\text{if } z = (y-y_\ast) \varepsilon^{-1/3} \in [z_+,+\infty),\\
\end{cases}
\end{equation} 
where we abbreviate $r_B (z) := r_B (z, l(z_-))$ and $r_S (z) :=r_B(z)-2$. 

In summary, asymptotically for small trading costs, the solution of the nonlinear free-boundary problem \eqref{eq:odeq} can be reduced to finding the root of a scalar function. The approximation~\eqref{eq:hatu} performs very well, even for relatively large values of the asymptotic parameters $\varepsilon, \lambda$, cf. Figure~\ref{fig:asymptotics}. Moreover, it also allows to say more about the structure of the optimal turnover rate near the trading boundaries and far away from these. 

Far away from the no-trade region, i.e., as $z\rightarrow -\infty$ or, equivalently,  $y\downarrow 0$ (resp. $z\rightarrow \infty$, i.e., $y \uparrow 1$) the boundary conditions
\begin{equation*}
\lim_{z\rightarrow -\infty} \frac{r_B(z)}{-\sqrt{2 K \gamma \sigma^2} z}=1, \quad \text{resp.} \quad \lim_{z\rightarrow \infty} \frac{r_S(z)}{-\sqrt{2 K \gamma \sigma^2} z}=1,
\end{equation*} 
imply
\begin{align}
\hat{u} (y) &= \frac{1}{2K} \left(-\sqrt{2 K \gamma \sigma^2} z -1\right) \varepsilon^{-1/3} + \mathcal{O}(1) \notag\\
 &=
\frac{1}{2K}  \sqrt{2 K \gamma \sigma^2} (y_\ast-y) \varepsilon^{-2/3}+ \mathcal{O}(\varepsilon^{-1/3}) \notag\\
 &= \sigma \sqrt{\frac{\gamma}{2}} (y_\ast-y)\lambda^{-1/2}+ \mathcal{O}(\lambda^{-1/4}).\label{eq:asymp2}
\end{align}
Hence, for large deviations from the trading boundaries we recover the leading-order expansion of the wealth turnover without proportional transaction costs~\cite[Formula (2.12)]{guasoni.weber.13}.

Close to the trading boundaries, we can apply Taylor's Theorem to get a first-order approximation as well. To this end, we first compute the derivative of the corresponding Whittaker functions. The differential property~\cite[Formula (2.4.24)]{slater.60} and the recurrence relation~\cite[Formula (2.5.11)]{slater.60} of $W(k,m,x)$ show that
\begin{align*}
\left(\frac{W(k+1,-1/4,x)}{x W(k,-1/4,x)}\right)' =& \frac{W(k+1,-1/4,x)}{x W(k,-1/4,x)}\left(\frac{W'(k+1,-1/4,x)}{W(k+1,-1/4)}-\frac{W'(k,-1/4,x)}{W(k,-1/4,x)}-\frac{1}{x}\right) \\
 =& \frac{W(k+1,-1/4,x)}{x W(k,-1/4,x)} \Biggl[ \frac{W(k+1,-1/4,x)}{x W(k,-1/4,x)} -\frac{2}{x} \\
 & \quad -\frac{1}{x^2}\left(\left(-\frac{1}{4}-(k+1)+\frac{1}{2}\right)\left(-\frac{1}{4}+(k+1)-\frac{1}{2}\right)\frac{x W(k,-1/4,x)}{W(k+1,-1/4,x)}\right)\\
 & \quad -(2(k+1)x-x^2)\Biggr],
\end{align*}
where $x= a \sqrt{2 K \gamma \sigma^2} z^2$. Taking into account the value matching condition $r_B(z_-)=1$, a straightforward computation yields that the wealth turnover close to the trading boundaries $y_\pm$ is given by
\begin{equation}\label{eq:hatuapproxnearzm}
\hat{u}(y) = 
\begin{cases}
\frac{\varepsilon^{-1/3}}{2K} \times F \times (z-z_-) + \mathcal{O}(1), &\text{if } z = (y-y_\ast) \varepsilon^{-1/3} < z_-,\\
\frac{\varepsilon^{-1/3}}{2K} \times F \times (z-z_+) + \mathcal{O}(1), &\text{if } z = (y-y_\ast) \varepsilon^{-1/3} > z_+,
\end{cases}
\end{equation} 
where
\begin{align*}
x_- &:= a \sqrt{2 K \gamma \sigma^2} z_-^2,\\
D &:= \frac{1}{2}\left(1- \frac{1}{\sqrt{2 K \gamma \sigma^2} z_-^2}\left(\frac{1}{2a}+\frac{c}{2a}\sqrt{\frac{1}{2 K \gamma \sigma^2}}\right)\right),\\
E &:= D\left(D-\frac{2}{x_-}-\frac{1}{x_-^2}\left(\frac{1}{D}\left(-\frac{1}{4}-(k+1)+\frac{1}{2}\right)\left(-\frac{1}{4}+(k+1)-\frac{1}{2}\right)-(2(k+1)x_- -x_-^2)\right)\right),\\
F &:= \left(\frac{1}{2a}+\frac{c}{2a}\sqrt{\frac{1}{2 K \gamma \sigma^2}}\right)\frac{1}{z_-^2}+\sqrt{2 K \gamma \sigma^2}-2 \sqrt{2 K \gamma \sigma^2} (D+2 a \sqrt{2 K \gamma \sigma^2} E z_-^2).
\end{align*}
\begin{figure}
\subfigure{\includegraphics[width=0.49\textwidth]{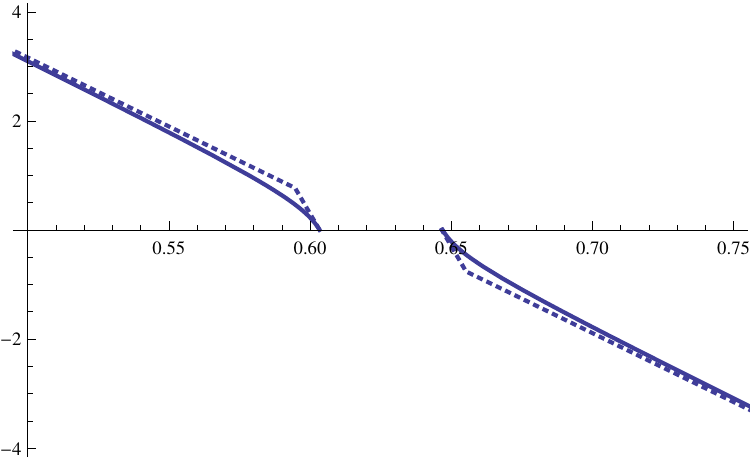}}\hfill
\subfigure{\includegraphics[width=0.49\textwidth]{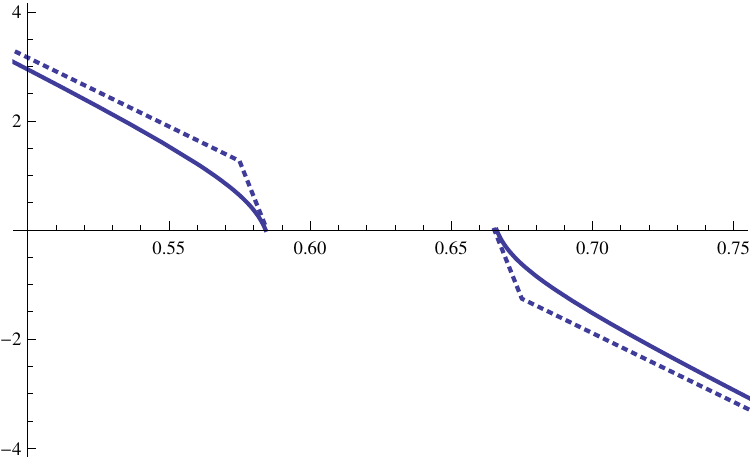}}
\caption{optimal wealth turnover $\hat{u}$ from Theorem~\ref{mainresult} (solid), and its piecewise linear approximation (dotted) by~\eqref{eq:hatuapproxnearzm} close to the trading boundaries and~\eqref{eq:asymp2} further away from these. Model parameters are $\mu = 8\%$, $\sigma = 16\%$, $\gamma = 5$, $\lambda= 0.01\%$, and $\varepsilon=0.1\%$ (left panel) resp.\ $\varepsilon=0.5\%$ (right panel).}
\label{fig:turnovernearzm}
\end{figure}

In particular, for small deviations from the trading boundaries the wealth turnover -- at the first order -- is again linear, however with a different slope. This is illustrated in Figure \ref{fig:turnovernearzm}. If the proportional costs are small compared to the price impact, then the two slopes are very similar. For larger spreads, however, turnover grows significantly faster near the trading boundaries, compare the right panel in Figure \ref{fig:turnovernearzm}.

\paragraph{More General Asymptotics}\label{ss:robust}
In this section we argue informally that -- in the small-cost limit -- the structure of the solution obtained here for the specific price-impact model~\eqref{modelassump} remains valid rather generally. This is consistent with results for linear and quadratic costs treated separately \cite{kallsen.muhlekarbe.13,moreau.al.14}\footnote{Making these arguments rigorous would require to specify suitable integrability and boundary conditions for the additional state variables that appear in this context.} 

To this end, consider an exogenously given diffusion process $\xi$ with dynamics
\begin{equation}\label{eq:xi}
d \xi_t= \mu_\xi(\xi_t) dt + \sigma_\xi (\xi_t) dW_t,
\end{equation}
where $\mu_\xi, \sigma_\xi$ are smooth functions such that the SDE~\eqref{eq:xi} is well-defined. This additional state variable can model the wealth of other agents trading in the market, for example. The average execution price for the large investor is in turn given by
$$S_t \left(1+ \varepsilon \cdot \mathrm{sgn}(\Delta \theta)+ \lambda(\xi_t,X_t) \frac{S_t \Delta \theta}{\Delta t}\right).$$
To wit, the price impact is a general function of the large investor's wealth $X$ and the exogenous process $\xi$. If $\lambda(\xi,x)= \lambda /x$ we recover the model~\eqref{modelassump}. 

An easy application of It\^o's Lemma shows that in this setting, the wealth process $X$ and the risky weight $Y$ satisfy:
\begin{align}
\frac{d X_t}{X_t} &= Y_t (\mu dt + \sigma dW_t)- \varepsilon |u_t| dt -\lambda(\xi_t,X_t) X_t u_t^2 dt,\label{dynamicsXnew}\\
dY_t &= \left(Y_t(1-Y_t) (\mu-Y_t \sigma^2)+u_t+ \varepsilon |u_t| Y_t + \lambda(\xi_t,X_t) X_t Y_t u_t^2 \right)dt+ Y_t(1-Y_t) \sigma dW_t.\label{dynamicsYnew}
\end{align}

Let $V(t,X_t,Y_t,\xi_t)$ denote the corresponding finite horizon value function, which is assumed to depend on the current wealth $X_t$, the current risky weight $Y_t$, the exogenous state $\xi_t$, and time $t$.  Arguing as in Section~\ref{heuristic}, it follows that 
\begin{equation}\label{reducedvaluefunctionnew}
V(t,x,y,\xi) = \frac{x^{1-\gamma}}{1-\gamma}e^{(1-\gamma)(\beta (T-t)+\int_{y_*}^y q(\xi,x,u) du)},
\end{equation}
where the function $q(\xi,x,y)$ satisfies
\begin{align}
0= &-\beta+ \mu y -\frac{\gamma \sigma^2}{2} y^2+ y(1-y) (\mu-\gamma \sigma^2 y) q + \frac{\sigma^2}{2}y^2(1-y)^2(q_y+(1-\gamma)q^2)\nonumber\\
 & + \mu y x \int_{y_*}^y q_x(\xi,x,u) du + \sigma^2 y^2(1-y) \left((1-\gamma)xq(\xi,x,y) \int_{y_*}^y q_x(\xi,x,u) du+ x q_x(\xi,x,y)\right)\nonumber\\
 &+ \frac{\sigma^2 y^2}{2}\Biggl[2(1-\gamma) x \int_{y_*}^y q_x(\xi,x,u) du + (1-\gamma)\left(x\int_{y_*}^y q_x(\xi,x,u) du\right)^2\nonumber\\
 &+ x^2\int_{y_*}^y q_{xx}(\xi,x,u) du\Biggr]\nonumber\\
 & + \mu_\xi(\xi) \int_{y_*}^y q_\xi(\xi,x,u) du+ \frac{\sigma_\xi^2(\xi)}{2} \left(\left(\int_{y_*}^y q_\xi(\xi,x,u) du\right)^2 (1-\gamma) + \int_{y_*}^y q_{\xi\xi}(\xi,x,u) du\right)\nonumber\\
 & +\sigma \sigma_\xi (\xi) y (1-y)\left((1-\gamma) q(\xi,x,y) \int_{y_*}^y q_\xi(\xi,x,u) du + q_\xi(\xi,x,y)\right)\nonumber\\
 & +\sigma \sigma_\xi (\xi) y \Biggl[(1-\gamma) \int_{y_*}^y q_\xi(\xi,x,u) du + (1-\gamma) x \int_{y_*}^y q_x(\xi,x,u)du \int_{y_*}^y q_{\xi}(\xi,x,u)du\nonumber\\
 &+ x \int_{y_*}^y q_{\xi x}(\xi,x,u)du\Biggr]\nonumber\\
&+\begin{cases}\label{eq:odeqnew}
\frac{1}{4 \lambda(\xi,x)x}\frac{(q-\varepsilon(1-y q+x \int_{y_*}^y q_x(\xi,x,u)du))^2}{1-y q+ x \int_{y_*}^y q_x(\xi,x,u)du}, &\text{if } y\in [0,y_-(\xi,x)],\\
 0, &\text{if } y\in[y_-(\xi,x),y_+(\xi,x)],\\
\frac{1}{4 \lambda(\xi,x)x}\frac{(q+\varepsilon(1-y q+x \int_{y_*}^y q_x(\xi,x,u)du))^2}{1-y q+x \int_{y_*}^y q_x(\xi,x,u)du},  &\text{if } y\in [y_+(\xi,x),1].
\end{cases}
\end{align}
and
\begin{align}
 q(\xi,x,y_-(\xi,x))= &\frac{\varepsilon+\varepsilon x\int_{y_*}^{y_-(\xi,x)} q_x(\xi,x,u)du }{1+\varepsilon y_-(\xi,x)},\nonumber\\
 q(\xi,x,y_+(\xi,x))= &\frac{-\varepsilon-\varepsilon x\int_{y_*}^{y_+(\xi,x)} q_x(\xi,x,u)du}{1-\varepsilon y_+(\xi,x)},\nonumber
\end{align}
for functions $0\leq y_-(\xi,x) \leq y_+(\xi,x) \leq 1$ to be determined. To derive the corresponding small-cost asymptotics, rescale transaction costs and price impact accordingly:
$$\lambda(\xi,x)x = K(\xi,x) \varepsilon^{4/3},$$
for some function $K$. Close to the no-trade region we again apply the homogenization approach and define 
\begin{align}\label{transformationzqbnew}
z:=\frac{y-y_\ast}{\varepsilon^{1/3}}, \quad q(\xi,x,y):= \varepsilon \times
\begin{cases}
 r_B(\xi,x,z), &y \in [0,y_-(\xi,x)],\\
 r(\xi,x,z), &y \in [y_-(\xi,x),y_+(\xi,x)],\\
 r_S(\xi,x,z),  &y \in [y_+(\xi,x),1],
\end{cases}
\quad
\beta:=\frac{\mu^2}{2 \gamma \sigma^2}-l \varepsilon^{2/3},
\end{align}
for  a constant\footnote{Note that the optimal long-run growth rate is typically constant even in factor models with an additional state variable, compare \cite{guasoni.robertson.12}.}  $l>0$ and functions $0 \leq y_-(\xi,x) \leq y_+(\xi,x) \leq 1$, as well as $r_B(\xi,x,z)$, $r(\xi,x,z)$, $r_S(\xi,x,z)$ to be determined. 
To identify the boundaries of the no-trade region we need additional conditions. To this end, we again consider $q(\xi,x,y)= \varepsilon^{2/3} q^*(\xi,x,y)$. As $\varepsilon \downarrow 0$,  a direct calculation shows that the ODE~\eqref{eq:odeqnew} reduces to 
\begin{equation}\label{growthconditionrnew}
q^*(\xi,x,y) = (2 K (\xi,x) \gamma \sigma^2)^{1/2} (y_*-y).
\end{equation} 
If the functions $r_B,r,r_S$ are sufficiently smooth, then as $\varepsilon \downarrow 0$ transformation~\eqref{transformationzqbnew} reduces\footnote{For each $h\in\{r_B,r,r_S\}$, note that if $h_{\xi}(\xi,x,\cdot)$ is continuous then $$ \int_{y_*}^y q_{\xi} (\xi,x,u) du= \varepsilon \int_{y_*}^y h_{\xi} (\xi,x,(u-y_*)\varepsilon^{-1/3}) du = \varepsilon^{4/3} \int_{0}^z h_{\xi} (\xi,x,s) ds= \mbox{o}(\varepsilon^{2/3}).$$} the ODE~\eqref{eq:odeqnew} to 
\begin{align}
0 &=
-\frac{\gamma\sigma^2}{2}z^2+l
+\frac{\sigma^2}{2}y_\ast^2(1-y_\ast)^2r'_B+\frac{1}{4K(\xi,x)}(r_B-1)^2,\quad  z \in(-\infty,z_-(\xi,x)],\label{leftrnew}\\
0 &= 
-\frac{\gamma\sigma^2}{2}z^2+l
+\frac{\sigma^2}{2}y_\ast^2(1-y_\ast)^2r', \qquad  \qquad \qquad \quad z \in[z_-(\xi,x),z_+(\xi,x)],\label{midrnew}\\
0 &= 
-\frac{\gamma\sigma^2}{2}z^2+l
+\frac{\sigma^2}{2}y_\ast^2(1-y_\ast)^2r'_S+\frac{1}{4K(\xi,x)}(r_S+1)^2,\quad  z \in[z_+(\xi,x),+\infty),\label{rightrnew}
\end{align}
where the rescaled buying and selling boundaries $z_-(\xi,x)$ and $z_+(\xi,x)$ satisfy:
\begin{align}\label{valuematchat1new}
r_B(\xi,x,z_-(\xi,x)) =r(\xi,x,z_-(\xi,x))= 1, \quad r_S(\xi,x,z_+(\xi,x)) =r(\xi,x,z_+(\xi,x)) = -1.
\end{align}
As $z \rightarrow -\infty$ (resp. $z \rightarrow \infty$ ) the function $r_B(\xi,x,\cdot)$ (resp. $r_S(\xi,x,\cdot)$) must diverge with the same rate as in \eqref{growthconditionrnew}:
\begin{equation}\label{rlminusinftynew}
\lim_{z \rightarrow -\infty} \frac{r_B(\xi,x,z)}{-\sqrt{2 K(\xi,x) \gamma \sigma^2} z}=1 \quad \text{and} \quad \lim_{z \rightarrow \infty} \frac{r_S(\xi,x,z)}{-\sqrt{2 K(\xi,x) \gamma \sigma^2} z}=1.
\end{equation}
As a result, the asymptotic trading boundaries and trading rate are still determined by an inhomogenous Riccati equation, but by a different one for each value of the additional state variables. To wit, the asymptotic solutions for more general cost structures are obtained by plugging in the current value of the cost into the asymptotic expansions derived in the previous section.

\section{Proofs}\label{sec:proofs}
\subsection{Proof of Theorem~\ref{mainresult}}
Assume throughout that $y_\ast \in (0,1)$. The first step towards a rigorous verification theorem is to show that the differential equation~\eqref{eq:odeq} indeed admits a solution with the required properties~(\ref{abelini0}-\ref{abelend}). To this end, we first rewrite \eqref{eq:odeq} in slope field notation:\footnote{This will be justified a posteriori by the monotonicity properties of $q$ established in Lemma \ref{odesolution}.}
\begin{equation}
q'=f(y,q) = 
\begin{cases}
f_B(y,q), &q \geq \frac{\varepsilon}{1+\varepsilon y}, \\
f_{NT}(y,q), &-\frac{\varepsilon}{1-\varepsilon y}\leq q \leq \frac{\varepsilon}{1+\varepsilon y}, \\
f_S(y,q), &q \leq -\frac{\varepsilon}{1-\varepsilon y},
\end{cases}
\end{equation}
where
\begin{equation*}
\begin{split}
-\beta + \mu y -\frac{\gamma \sigma^2}{2} y^2+ y(1-y) (\mu-\gamma \sigma^2 y) q + \frac{\sigma^2}{2}y^2(1-y)^2(q'+(1-\gamma)q^2)\\
+\begin{cases}
\frac{1}{4 \lambda}\frac{(q-\varepsilon(1-y q))^2}{1-y q}&=:f_B(y,q),\\
 0&=:f_{NT}(y,q),\\
\frac{1}{4 \lambda}\frac{(q+\varepsilon(1-y q))^2}{1-y q}&=:f_S(y,q).
\end{cases}
\end{split}
\end{equation*}
Notice that $f(y,q)$ is well defined because $f_B(y,q)=f_{NT}(y,q)$ on $q =\frac{\varepsilon}{1+\varepsilon y}$ and $f_{NT}(y,q)=f_S(y,q)$
on $q = -\frac{\varepsilon}{1-\varepsilon y}$.

\begin{myrek}\label{beta}
Allocating the entire wealth into the riskless asset (resp. the risky asset) is an admissible strategy that does not require trading. The corresponding
equivalent safe rate is $0$ (resp. $\mu-\frac{\gamma \sigma^2}{2}$). This provides a natural lower bound for the optimal equivalent safe rate, namely
$\beta\geq\max\{\mu-\frac{\gamma \sigma^2}{2},0\}$. Conversely, an upper bound is given by the frictionless equivalent safe rate $\frac{\mu^2}{2\gamma\sigma^2}$.
\end{myrek}

\begin{mylemma}\label{odesolution}
Suppose $\lambda$ and $\varepsilon$ are sufficiently small. Then, for a suitable $\beta \in \big[\max\{\mu-\frac{\gamma \sigma^2}{2},0\}, \frac{\mu^2}{2\gamma \sigma^2} \big]$, there is a solution of
$q'=f(y,q)$ such that
\begin{align}
q(0^+) &= b_0(\varepsilon,\lambda):= \varepsilon + 2 \sqrt{\lambda \beta}, \label{in0}\\
q(1^-) &= b_1(\varepsilon,\lambda):= \frac{\lambda d - \varepsilon(1-\varepsilon)-\sqrt{\lambda d (\lambda d -2 + 2 \varepsilon)}}{(1-\varepsilon)^2}, \quad \mbox{with }d = -\gamma \sigma^2-2 \beta+ 2 \mu.\label{in1}
\end{align}
In particular, there exist $y_-, y_+ \in [0,1]$ satisfying~\eqref{eq:odeq} and
\begin{align}\label{decreasingq}
\begin{cases}
q(y) > \frac{\varepsilon}{1+\varepsilon y},\ \qquad \quad \quad \text{if } y \in [0,y_-),\\
\frac{-\varepsilon}{1-\varepsilon y} \leq q(y) \leq \frac{\varepsilon}{1+\varepsilon y},\ \quad \text{if } y \in [y_-,y_+],\\
q(y) \leq \frac{-\varepsilon}{1-\varepsilon y},\ \qquad \quad \quad  \text{if } y \in (y_+,1].
\end{cases}
\end{align}
Moreover, the solution $q$ fulfills $q(y) y < 1$ for all $y\in[0,1]$.
\end{mylemma}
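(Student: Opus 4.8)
The plan is to read \eqref{eq:odeq} in its slope-field form $q'=f(y,q)$ and solve it as a one-parameter shooting problem in $\beta$, matched to the two boundary conditions \eqref{in0}--\eqref{in1} (both of which depend on $\beta$, through $b_0=\varepsilon+2\sqrt{\lambda\beta}$ and through $d=-\gamma\sigma^2-2\beta+2\mu$). For $y\in(0,1)$ and $qy<1$ the field $f$ is continuous and locally Lipschitz in $q$ (the buy/no-trade and no-trade/sell branches agree on the interfaces $q=\pm\varepsilon/(1\pm\varepsilon y)$), so there the Cauchy problem is classical and its solutions are $C^1$. The endpoints $y=0,1$ are irregular singular points of \eqref{eq:odeq} because of the factor $\tfrac{2}{\sigma^2 y^2(1-y)^2}$. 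Near $y=0$ only the buy branch is active; writing the bracket in \eqref{eq:odeq} as $h(y,q)$, one has $h(0,q)=\beta-\tfrac1{4\lambda}(q-\varepsilon)^2$, whose unique root with $q>\varepsilon$ is $b_0$, and there $\partial_q h(0,b_0)=-\sqrt{\beta/\lambda}<0$. This sign is exactly what makes the finite boundary value $b_0$ single out a \emph{unique} solution: the substitution $q=b_0+y\,w(y)$ turns \eqref{eq:odeq} into a regular fixed-point equation for $w$ on a short interval $[0,\delta]$ whose linearization is a contraction thanks to $\partial_q h(0,b_0)<0$; one also reads off $q'(0^+)=-\varepsilon^2-2\varepsilon\sqrt{\lambda\beta}-\sqrt{\lambda/\beta}\,(\mu(1+b_0)+\beta b_0)<0$, and $(\delta,q_\beta)$ can be taken to depend continuously on $\beta$. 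The same analysis at $y=1$ in the sell branch identifies $b_1$ in \eqref{in1} as the relevant endpoint root and yields a unique solution attaining $q(1^-)=b_1$ on a left-neighbourhood of $1$; this is the exact analogue of the endpoint lemmas of \cite{gerhold.al.14} and \cite{guasoni.weber.13}, which apply here essentially verbatim since near $0$ (resp.\ $1$) only the smooth buy (resp.\ sell) expression enters.

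Next come global continuation and the monotonicity in \eqref{decreasingq}. From $\partial_\beta b_0>0$ and $\partial_\beta(q')=\tfrac{2}{\sigma^2 y^2(1-y)^2}>0$, the comparison principle gives that $\beta\mapsto q_\beta(y)$ is strictly increasing wherever defined. One then exhibits an invariant region: for $\varepsilon,\lambda$ small $b_0<1$ (since $\beta\le\mu^2/2\gamma\sigma^2$ is bounded), so inside the strip $\{b_1(\varepsilon,\lambda)\le q\le b_0\}$ one has $qy\le b_0<1$ and the singular $1/(1-yq)$ terms stay harmless; checking the sign of $f$ on the boundary of $\{\,0\le y\le 1:\ b_1\le q\le b_0,\ q\ \text{below the locus}\ f=0\,\}$ shows this set is forward invariant. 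Hence $q_\beta$ does not blow up while it stays in this region and extends to all of $[0,1]$ for $\beta$ in a subinterval $[\beta_{\min},\beta_c)$, with $\beta_{\min}:=\max\{\mu-\gamma\sigma^2/2,0\}$ and $\beta_c\le\mu^2/2\gamma\sigma^2$ (and $\beta_c>\beta_{\min}$, since $q_{\beta_{\min}}$ reaches $y=1$ and solutions depend continuously on $\beta$); there $q_\beta$ is strictly decreasing and $q_\beta(y)\,y\le b_0<1$ on $[0,1]$.

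The shooting step: on $[\beta_{\min},\beta_c)$ put $\Phi(\beta):=q_\beta(1^-)-b_1(\varepsilon,\lambda)$, which is continuous by the previous steps and strictly increasing because $q_\beta(1^-)$ increases while $b_1$ decreases in $\beta$. At $\beta=\beta_{\min}$ one has $d\le 0$, with $d=0$ when $\mu>\gamma\sigma^2/2$, hence $b_1(\beta_{\min})=-\varepsilon/(1-\varepsilon)$, and the lowest solution ends on or below the selling threshold at $y=1$, so $\Phi(\beta_{\min})\le 0$ (the borderline case $\mu\le\gamma\sigma^2/2$ is analogous, shooting from a slightly larger $\beta>0$); as $\beta\uparrow\beta_c$, equivalently $\beta\to\mu^2/2\gamma\sigma^2$ for small frictions, the solution is pushed up and can leave the invariant region near $y=1$ only through $q>b_1$, so $\Phi>0$ there. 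By the intermediate value theorem there is $\beta^\ast$ with $\Phi(\beta^\ast)=0$; then $q:=q_{\beta^\ast}$ is $C^1$ on $[0,1]$ and satisfies \eqref{in0}--\eqref{in1}. Finally, with $q$ strictly decreasing, $q(0)=b_0>\varepsilon$ and $q(1)=b_1<0$, set $y_-:=\inf\{y:q(y)\le\varepsilon/(1+\varepsilon y)\}$ and $y_+:=\inf\{y:q(y)\le-\varepsilon/(1-\varepsilon y)\}$; since $\tfrac{d}{dy}\bigl(q(y)-\varepsilon/(1+\varepsilon y)\bigr)=q'(y)+\varepsilon^2/(1+\varepsilon y)^2<0$ — the negative $q'$, of order $\sqrt\lambda$ near $y=0$, dominates the $O(\varepsilon^2)$ term — and similarly for the other threshold, each crossing is unique, $0\le y_-\le y_+\le 1$, and \eqref{decreasingq} holds, with $q'=f$ valid in each of the three regimes. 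The bound $q(y)\,y\le b_0<1$ was already noted. Smallness of $\varepsilon,\lambda$ is used throughout to keep $\beta^\ast$ near $\mu^2/2\gamma\sigma^2$ and to validate the invariant-region and sign estimates.

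The main obstacle is the endpoint analysis: legitimising the slope field at the irregular singular points $y=0,1$, i.e.\ proving that the finite boundary values $b_0,b_1$ select unique solutions and that these depend continuously (indeed monotonically) on $\beta$ all the way up to $y=0$ and $y=1$, the mechanism being the sign of $\partial_q h$ at the endpoints. A close second is verifying the forward-invariance of the region in the second step — equivalently, the global bounds $q'<0$ and $qy<1$ — which needs a careful but elementary sign analysis of $f$ exploiting the smallness of $\varepsilon,\lambda$, together with pinning down the signs of $\Phi$ at the two ends of the shooting interval.
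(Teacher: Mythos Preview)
Your overall architecture---endpoint analysis at the irregular singular points $y=0,1$ followed by a continuity/shooting argument in $\beta$---is the paper's strategy. The paper phrases the shooting two-sidedly, launching $q_0$ from $y=0$ and $q_1$ from $y=1$, showing $q_0>q_1$ for $\beta>\mu^2/2\gamma\sigma^2$ and $q_0<q_1$ for $\beta=\mu^2/2\gamma\sigma^2-c$ (with $\lambda,\varepsilon$ small), and invoking continuous dependence on $\beta$ to get $q_0\equiv q_1$ somewhere in between; your one-sided map $\Phi(\beta)=q_\beta(1^-)-b_1$ is an equivalent reformulation. Your identification of $b_0,b_1$ via the sign of $\partial_q h$ at the endpoints is exactly what the paper imports from \cite{guasoni.weber.13}.

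The genuine gap is the argument for \eqref{decreasingq}. You assert that $q$ is strictly decreasing and then claim the crossings with $\pm\varepsilon/(1\pm\varepsilon y)$ are unique because ``the negative $q'$, of order $\sqrt\lambda$ near $y=0$, dominates the $O(\varepsilon^2)$ term''. Two problems: first, the crossings occur near the Merton point $y_\ast$, not near $y=0$, so the order of $q'$ there is governed by $\beta-\mu^2/2\gamma\sigma^2$ (of order $\varepsilon^{2/3}$ in your own scaling), not by $\sqrt\lambda$; second, and more basically, your invariant-region step bounds $q$ but does not yield a global sign for $q'$, so ``strictly decreasing'' is asserted rather than proved. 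The paper avoids monotonicity of $q$ altogether: it shows directly that the equation $f_{NT}\bigl(y,\tfrac{\varepsilon}{1+\varepsilon y}\bigr)=\tfrac{d}{dy}\tfrac{\varepsilon}{1+\varepsilon y}$ has exactly two roots $y_1<y_2$ in $(0,1)$, with the slope field steeper than the threshold on $(y_1,y_2)$ and shallower outside, so a solution starting above the threshold at $y=0$ and ending below it at $y=1$ must cross exactly once, in $(y_1,y_2)$; the sell threshold is handled symmetrically. You should replace your crossing argument by this slope-field comparison, or else supply an honest proof that $q'<0$ at (and in a neighbourhood of) each crossing.
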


\begin{proof}

First, notice that for every $y\in(0,1)$ we have $\frac{1}{y}>\frac{\varepsilon}{1+\varepsilon y}$, and thus 
$\lim_{q\rightarrow(\frac{1}{y})^-}f(y,q)=\lim_{q\rightarrow(\frac{1}{y})^-}f_B(y,q)=-\infty$. 
Hence, every solution to $q'=f(y,q)$ starting below the curve $\frac{1}{y}$ must remain below this curve. The rest of the proof proceeds as follows:
\begin{itemize}
\item [(i)]  For every $\beta>\max\{\mu-\frac{\gamma \sigma^2}{2},0\}$, there is a unique solution $q_0^\beta(y)$ defined on $(0,y_0)\subset(0,1)$ that satisfies the boundary condition (\ref{in0}) at $y=0$
and a unique solution $q_1^\beta(y)$ defined on $(y_1,1)\subset(0,1)$ that satisfies the boundary condition (\ref{in1}) at $y=1$.
\item [(ii)] If $\beta>\frac{\mu^2}{2\gamma \sigma^2}$, then $q_0^\beta(y)>0$ and $0>q_1^\beta(y)$ on their respective definition intervals.
\item [(iii)] Set $\beta=\frac{\mu^2}{2\gamma \sigma^2}-c$, with $c>0$. If $\lambda$ and $\varepsilon$ are sufficiently small, we have $q_0^\beta(y)<q_1^\beta(y)$ on $(0,y_0)\cap(y_1,1)$. Additionally, if $y_0<1$ then $\lim_{y\rightarrow y_0^-}q_0^\beta(y)=-\infty$, and if $y_1>0$ then $\lim_{y\rightarrow y_1^+}q_1^\beta(y)=+\infty$.
\end{itemize}
The solution to $q'=f(y,q)$ depends continuously on $\beta$. Hence, if $\lambda$ and $\varepsilon$ are sufficiently small, 
 we therefore have $q_0^{\beta^*}\equiv q_1^{\beta^*}$ for some $\beta^* \in \left[\max\{\mu-\gamma \sigma^2/2,0\}, \mu^2/(2\gamma \sigma^2) \right]$.\\

Proof of (i): This follows similarly as in \cite[Lemma A.8(i)]{guasoni.weber.13}. Indeed, replace $q^2/4\lambda$ by $(q-\varepsilon)^2/4\lambda$ in the first line of the proof of \cite[Lemma A.8(i)]{guasoni.weber.13}. Then, the proof proceeds analogously, leading to 
\begin{equation*}
h(0)=\varepsilon+2\sqrt{\lambda\beta}, \quad h'(0)=-\left(\frac{\lambda}{\beta}\right)^{1/2}(\mu+(\mu+\beta) h(0))-\varepsilon h(0)<0.
\end{equation*}
This term is negative like the corresponding expression in the first displayed equation in \cite[Lemma~A.8(i)]{guasoni.weber.13}. Hence, the remaining steps can be carried through unchanged.\\

Proof of (ii): This follows verbatim as in \cite[Lemma A.8(iii)]{guasoni.weber.13}.\\

Proof of (iii): Here, additional work is required compared to \cite{guasoni.weber.13}. The proof is based on the following observations: 

\begin{myrek}\label{remark1}
On $\{(y,q):y\in(0,1),q<\frac{1}{y}\}$, we have $f(y,q)\leq f_{NT}(y,q)$.
\end{myrek}

\begin{myrek}\label{remark2}
There are $\alpha,\zeta>0$, and $\eta<0$ independent of $\lambda$ and $\varepsilon$ such that 
$$f(y,q)\leq f_{NT}(y,q)\leq\eta<0, \quad \mbox{on } [y_\ast-\alpha,y_\ast+\alpha]\times [-\zeta,\zeta].$$
\end{myrek}

\begin{proof}[Proof of Remark~\ref{remark2}]
On $[y_-,y_+]$, Equation \eqref{eq:odeq} can be rewritten as 
$$y^2(1-y)^2 f_{NT}(y,q)=-c+k_1(y,q)+k_2(y,q),$$
where $\lim_{y\rightarrow y_\ast}k_1(y,q)=0$ and
$\lim_{q\rightarrow 0}k_2(y,q)=0$ uniformly. In particular, there is a negative constant $\eta$ such that, if $y$ is sufficiently close to $y_\ast$ and $q$ to $0$, we have $f_{NT}(y,q)\leq\eta$.
\end{proof}

A straightforward computation shows that one can choose $\lambda_1$ and $\varepsilon_1$ small enough such that, for any $\lambda\leq\lambda_1$ and $\varepsilon\leq\varepsilon_1$:
\begin{itemize}
\item $\frac{d}{dy}[y^2(1-y)^2f_{NT}(y,b_0(\lambda,\varepsilon))]<0$ on $(0,y_\ast-\alpha)$;\\
\item $\frac{d}{dy}[y^2(1-y)^2f_{NT}(y,b_1(\lambda,\varepsilon))]>0$ on $(y_\ast+\alpha,1)$;\\
\item  $\max\{b_0,-b_1\}<\zeta$ and $\eta<\frac{b_1(\lambda,\varepsilon)-b_0(\lambda,\varepsilon)}{2\alpha}<0$.
\end{itemize}
From the first point, and since $y^2(1-y)^2f_{NT}(y,b_0(\lambda,\varepsilon))=0$ for $y=0$, we get $f_{NT}(y,b_0(\lambda,\varepsilon))<0$ on $(0,y_\ast-\alpha)$. Remark~\ref{remark1} implies that $f(y,b_0(\lambda,\varepsilon))<0$ on $(0,y_\ast-\alpha)$.

The second point, with the same arguments, yields $f(y,b_1(\lambda,\varepsilon))<0$ on $(y_\ast+\alpha,1)$.

Consider now the line $q=\frac{b_1-b_0}{2\alpha}(y-y_\ast-\alpha)+b_1$. Given a solution $q(y)$, the third point and Remark~\ref{remark2} imply that if $q(y_\ast-\alpha)<b_0$ then $q(y)< \frac{b_1-b_0}{2\alpha}(y-y_\ast-\alpha)+b_1$ on $[y_\ast-\alpha,y_\ast+\alpha]$.
Define the following function:
\begin{equation}
g(y) = \left\{
        \begin{array}{ll}
            b_0, & \quad y\in(0,y_\ast-\alpha), \\
            \frac{b_1-b_0}{2\alpha}(y-y_\ast-\alpha)+b_1, & \quad y\in[y_\ast-\alpha,y_\ast+\alpha], \\
            b_1, & \quad y\in(y_\ast+\alpha,1).
        \end{array}
    \right.
\end{equation}
We have just shown that $f(y,g(y))<g'(y)$ on $(0,1)\setminus\{y_\ast-\alpha,y_\ast+\alpha\}$, thus $q_0^\beta(y)<g(y)$ on the definition interval of $q_0$. Analogously, $q_1^\beta(y)>g(y)$ and so $q_0^\beta(y)<q_1^\beta(y)$ on their common definition interval. This completes the proof of Item (iii).

We have shown that $q_0^\beta(y)>q_1^\beta(y)$ for $\beta>\frac{\mu^2}{2\gamma\sigma^2}$ and $q_1^\beta(y)>q_0^\beta(y)$ for $\beta=\frac{\mu^2}{2\gamma\sigma^2}-c$ with $c>0$ and sufficiently small $\varepsilon$ and $\lambda$ on their common domain. This proves that there exists $\beta^*= \frac{\mu^2}{2\gamma \sigma^2}-c(\varepsilon,\lambda)$ and $\bar y\in(0,1)$ such that $q_0^{\beta^*}(\bar y)=q_1^{\beta^*}(\bar y)$. Since both $q_0^{\beta^*}(\cdot)$ and $q_1^{\beta^*}(\cdot)$ satisfy equation~\eqref{eq:odeq}, $(q_0^{\beta^*})'(\bar y)=(q_1^{\beta^*})'(\bar y)$ and hence $q_0^{\beta^*}(y)=q_1^{\beta^*}(y)$ on the whole interval $(0,1)$. Therefore, we found a solution $q^{\beta^*}(y)$ satisfying conditions (\ref{in0}) and (\ref{in1}).
Finally, we show that the set $\{y: -\varepsilon< \frac{q^{\beta^*}(y)}{1-y q^{\beta^*}(y)} < \varepsilon \}$ is an interval $[y_-,y_+]$. In particular, it is enough to show that the solution $q^{\beta^*}(y)$ crosses the curves $q=\frac{\varepsilon}{1+\varepsilon y}$ and $q=-\frac{\varepsilon}{1-\varepsilon y}$ just once, in $y_-$ and $y_+$, respectively.

 The equation $f_{NT}(y,\frac{\varepsilon}{1+\varepsilon y})=\frac{d}{dy}(\frac{\varepsilon}{1+\varepsilon y})$ has exactly two solutions $y_2<y_3$ in $(0,1)$. In particular, $f_{NT}(y,\frac{\varepsilon}{1+\varepsilon y})>\frac{d}{dy}(\frac{\varepsilon}{1+\varepsilon y})$ on $(0,y_2)\cup(y_3,1)$ and $f_{NT}(y,\frac{\varepsilon}{1+\varepsilon y})<\frac{d}{dy}(\frac{\varepsilon}{1+\varepsilon y})$ on $(y_2,y_3)$. Since $q^{\beta^*}(0^+)>\varepsilon$ and $q^{\beta^*}(1^-)<\frac{\varepsilon}{1+\varepsilon}$, the solution $q^{\beta^*}(y)$ crosses $q=\frac{\varepsilon}{1+\varepsilon y}$ just once in $y_-\in(y_2,y_3)$. By the same arguments, $q^{\beta^*}(y)$ crosses $q=-\frac{\varepsilon}{1-\varepsilon y}$ just once in $y_+$.
\end{proof}

Unlike for small proportional transaction costs \cite{janecek.shreve.04,gerhold.al.14}, the frictionless Merton proportion $y_\ast$ does not generally lie in the no-trade region $[y_-,y_+]$ in the present setting. To see this, recall from Guasoni and Weber \cite[Remark A.13]{guasoni.weber.13} that in their model with price impact -- but without proportional transaction costs -- turnover is zero at exactly one point $y_\dagger$, which is $\mathcal{O}(\lambda^{1/2})$-close but not identical to $y_\ast$ for small $\lambda$. For a given small price impact $\lambda$, the trading boundaries $y_-,y_+$ converge to $y_\dagger$ as $\varepsilon \downarrow 0$. Hence, $y_\ast \notin [y_-,y_+]$ if the transaction cost is sufficiently small compared to the price impact. However, numerical evidence indicates that this effect only appears if the ratio $\varepsilon/\lambda$ is extremely small. Otherwise, the Merton proportion is contained in the no-trade region, compare Figures \ref{fig:tradingrates} and \ref{fig:asymptotics}.\\

 As shown in~\cite[Theorem 2.3]{guasoni.weber.13}, levered or short positions in the risky asset cannot be admissible with linear price impact. This remains true in the present setting with additional proportional costs:
 
\begin{mylemma}\label{ybound}
Let $u$ be an admissible strategy. Then, for sufficiently small $\varepsilon$ and $\lambda$, the corresponding risky weight
\begin{align}
dY_t &= (Y_t(1-Y_t)(\mu-Y_t \sigma^2) + (u(Y_t)+\varepsilon Y_t |u(Y_t)| +  \lambda Y_t u(Y_t)^2)) dt+ Y_t(1-Y_t) \sigma dW_t,\label{lemmadynamicsy}\\
Y_0 &= y \in (0,1),\label{initialvaluey}
\end{align}
takes values in $[0,1]$ a.s.\ for all $t$.
\end{mylemma}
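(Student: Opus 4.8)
The plan is to argue by contradiction, adapting the proof of \cite[Theorem~3]{guasoni.weber.13} to the present setting with an additional bid-ask spread: I claim that if $Y$ left $[0,1]$ with positive probability, then $Y$ would explode in finite time with positive probability, which is incompatible with $u$ being admissible, since admissibility requires \eqref{lemmadynamicsy} to have a global strong solution. Since $Y_0\in(0,1)$ and $Y$ is continuous, "$Y$ leaves $[0,1]$" means either $\mathbb{P}(\exists t:Y_t>1)>0$ or $\mathbb{P}(\exists t:Y_t<0)>0$; it suffices to rule out the first, the second being symmetric (note touching $0$ or $1$ is allowed). The spread terms $\varepsilon y|u|$ and $\varepsilon|u|$ are the only genuinely new feature relative to \cite{guasoni.weber.13}, and they only reinforce the relevant inequalities, so the structure is unchanged.

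The quantitative heart of the argument is a control-free lower bound on the drift of $Y$. Writing $b(y,u):=y(1-y)(\mu-y\sigma^2)+u+\varepsilon y|u|+\lambda yu^2$ for the drift in \eqref{lemmadynamicsy}, nonnegativity of $\varepsilon y|u|$ and $\lambda yu^2$ for $y>0$ together with $u+\lambda yu^2\ge-\tfrac{1}{4\lambda y}$ yields $b(y,u)\ge\beta(y):=y(1-y)(\mu-y\sigma^2)-\tfrac{1}{4\lambda y}$ for \emph{all} $y>0$ and \emph{all} $u\in\mathbb{R}$. This is exactly where the destabilising nature of linear price impact enters: because $\lambda yu^2$ carries the sign of $y$, no amount of selling can push the drift below $\beta(y)$, and $\beta(y)=\sigma^2y^3-(\sigma^2+\mu)y^2+\mu y-\tfrac{1}{4\lambda y}\sim\sigma^2y^3>0$ for large $y$. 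Hence, for $\varepsilon,\lambda$ sufficiently small, one can fix a threshold $Y^\dagger\in(1,\infty)$, of order $\lambda^{-1/4}$, with $\beta(y)\ge\tfrac{\sigma^2}{2}y^3$ on $[Y^\dagger,\infty)$.

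The argument then proceeds in two probabilistic steps, chained by the strong Markov property. Assume $\mathbb{P}(\exists t:Y_t>1)>0$; then $\mathbb{P}(Y\text{ reaches }1+\delta)>0$ for some $\delta\in(0,Y^\dagger-1)$. \emph{Step 1:} on $(1,Y^\dagger)$ the diffusion coefficient $\sigma y(1-y)$ is nondegenerate, so (by a scale-function, respectively support-theorem, argument as in \cite{guasoni.weber.13}) starting from $1+\delta$ the process reaches $Y^\dagger$ before returning to $1$ with positive conditional probability — no matter how aggressively the feedback pushes back. \emph{Step 2:} from $Y^\dagger$, compare $Y$ with the diffusion $\bar Y$ solving $d\bar Y=\beta(\bar Y)\,dt+\sigma\bar Y(1-\bar Y)\,dW$, $\bar Y_0=Y^\dagger$; since $b(y,u(y))\ge\beta(y)$ for all $y>1$, the comparison theorem gives $Y_t\ge\bar Y_t$ until one of them returns to $1$. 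A Feller-type computation — most transparently via the substitution $\xi=1/\bar Y$, under which $\bar Y\to\infty$ becomes $\xi\to0$ and, near $\xi=0$, $\xi$ is a diffusion with bounded coefficients, $d\xi\approx(\mu-\sigma^2)\,dt+\sigma\,dW$, whose lower endpoint $0$ is therefore accessible — shows that $\bar Y$, hence $Y$, explodes before returning to $1$ with positive probability. Combining the two steps contradicts the existence of a global solution, so $\mathbb{P}(\exists t:Y_t>1)=0$; the lower boundary is handled identically using the matching \emph{upper} bound $b(y,u)\le y(1-y)(\mu-y\sigma^2)+\tfrac{1}{4\lambda|y|}$ valid for $y<0$.

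The main obstacle is Step 2: verifying that, beyond the threshold, the drift genuinely forces blow-up despite the diffusion coefficient being quadratic at infinity ($\sigma y(1-y)\sim-\sigma y^2$), and checking that the threshold $Y^\dagger$ can indeed be chosen in $(1,\infty)$ with all required inequalities holding uniformly for small $\varepsilon,\lambda$ — this is where "$\varepsilon,\lambda$ sufficiently small" is used. A secondary technical point is that the scale-function estimates in Step 1 need the drift $y\mapsto b(y,u(y))$ to be sufficiently well behaved on compact subintervals of $(1,Y^\dagger)$, which follows from the standing assumption that \eqref{lemmadynamicsy} admits a (unique) strong solution.
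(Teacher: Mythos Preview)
Your proposal is correct and follows essentially the same route as the paper, which in turn defers to \cite[Lemma~15, Lemma~16, Theorem~3]{guasoni.weber.13}: a control-free bound on the drift of $Y$ outside $[0,1]$, comparison with an autonomous diffusion, and explosion with positive probability.

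There is one structural difference worth noting. The paper, following \cite{guasoni.weber.13}, does not stop at ``explosion contradicts global existence''; it invokes the analogue of \cite[Lemma~16]{guasoni.weber.13} to show that the \emph{wealth} process satisfies $X^u_\tau=0$ a.s.\ on $\{\tau<\infty\}$, and derives the contradiction from the positive-wealth requirement. Under the present paper's definition of admissibility, which explicitly demands a unique strong solution of (\ref{dynamicsX}--\ref{dynamicsY}) on $[0,\infty)$, your shortcut is legitimate and in fact cleaner: explosion of $Y$ directly precludes a global solution, so the wealth argument is not needed. Your handling of the spread term, simply dropping $\varepsilon y|u|\ge 0$ for $y>0$ (respectively $\varepsilon y|u|\le 0$ for $y<0$) to recover the same $-1/(4\lambda y)$ bound as in \cite{guasoni.weber.13}, is also slightly cruder than the paper's modification $-(1-\varepsilon y)/(4\lambda)$ but perfectly adequate for the conclusion.
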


\begin{proof}
With minor modifications, the assertion follows along the lines of~\cite[Lemma A.2, Lemma A.3, Lemma A.4, and Theorem 2.3]{guasoni.weber.13}.\footnote{In the proof of \cite[Lemma A.2]{guasoni.weber.13}, use $-(1-\varepsilon y)/4\lambda$ instead of $-1/4\lambda$ in the definition of $\tilde{\mu}$; then the proof can be carried through along the same lines. In the proof of Lemma \ref{ybound}, our analogue of \cite[Theorem 2.3]{guasoni.weber.13}, proportional transaction costs lead to an additional term $\int_0^T |\dot{\theta}_t|dt$ in the numerator of the expression analyzed in \cite[Lemma A.4]{guasoni.weber.13}. However, the latter still converges to zero by the same arguments as in the proof of \cite[Lemma A.4]{guasoni.weber.13}.} For the sake of completeness, we briefly recall the main ideas here. Let $u$ be any admissible strategy. First, verify that a stochastic process with the dynamics~\eqref{lemmadynamicsy} and initial value $y \in (1,\infty)$ resp. $y \in (-\infty,0)$ has a finite exploding time $\tau$ with positive 
probability, i.e., $\mathbb{P}[\tau < \infty] > 0.$ In a second step, show that the corresponding wealth process $X^u$ satisfies, $X^u_\tau (\omega)= 0$ a.s.\ on $\{\tau < \infty\}$ and $\theta_\tau > 0$ a.s. This in turn implies that any admissible strategy must fulfill $Y_t \in [0,1]$ a.s. for all $t$.
\end{proof} 
 
 Next, verify that the candidate strategy $\hat{u}$ from Theorem~\ref{mainresult} is admissible.

\begin{mylemma}\label{01uhat}
Define $\beta,q,y_-,y_+$ as in Lemma~\ref{odesolution} and set
$$
\hat{u}(y)=
\begin{cases}
\frac{1}{2\lambda}\left(\frac{q(y)}{1-y q(y)}-\varepsilon\right), &\text{if } y\in [0,y_-),\\
0, &\text{if } y \in [y_-,y_+],\\
\frac{1}{2\lambda}\left(\frac{q(y)}{1-y q(y)}+\varepsilon\right),  &\text{if } y\in (y_+,1].
\end{cases}
$$ 
Then, for sufficiently small $\varepsilon$ and $\lambda$, the SDE
\begin{align*}
dY^{\hat{u}}_t &= (Y^{\hat{u}}_t(1-Y^{\hat{u}}_t)(\mu-Y^{\hat{u}}_t \sigma^2) + (\hat{u}(Y^{\hat{u}}_t)+\varepsilon Y^{\hat{u}}_t |\hat{u}(Y^{\hat{u}}_t)| +  \lambda Y^{\hat{u}}_t \hat{u}(Y^{\hat{u}}_t)^2))dt+ Y^{\hat{u}}_t(1-Y^{\hat{u}}_t) \sigma dW_t,\\
Y^{\hat{u}}_0 &= y \in (0,1)
\end{align*}
has a unique strong solution which takes values in $[0,1]$ a.s.\ for all $t$. In particular, the strategy $\hat{u}$ is admissible.
\end{mylemma}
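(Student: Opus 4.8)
The plan is to deduce everything from the regularity of $q$ already obtained in Lemma~\ref{odesolution}: this reduces the claim to a one-dimensional autonomous SDE with Lipschitz coefficients whose drift points inward at both endpoints $0$ and $1$, after which the confinement of the solution to $[0,1]$ follows from the same comparison/localization argument used for the analogous statements in \cite{guasoni.weber.13} (cf.\ the proof of Lemma~\ref{ybound}).

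First I would check that $\hat u$ is bounded and Lipschitz on $[0,1]$. By Lemma~\ref{odesolution}, $q\in C^1([0,1])$ and $q(y)y<1$ for all $y\in[0,1]$; since $y\mapsto yq(y)$ is continuous on the compact interval $[0,1]$ its maximum is strictly below $1$, so $1-yq(y)$ is bounded away from zero and $y\mapsto q(y)/(1-yq(y))$ is $C^1$, hence bounded and Lipschitz, on $[0,1]$. Thus $\hat u$ is $C^1$ on each of $[0,y_-]$ and $[y_+,1]$ and vanishes on $[y_-,y_+]$, and by the value-matching conditions \eqref{abelend1}--\eqref{abelend} (which give $q(y_-)/(1-y_-q(y_-))=\varepsilon$ and $q(y_+)/(1-y_+q(y_+))=-\varepsilon$) the three pieces agree at $y_-$ and $y_+$, so $\hat u$ is globally Lipschitz on $[0,1]$. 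Consequently the drift $b(y):=y(1-y)(\mu-\sigma^2y)+\hat u(y)+\varepsilon y|\hat u(y)|+\lambda y\hat u(y)^2$ and the diffusion coefficient $y\mapsto\sigma y(1-y)$ are Lipschitz on $[0,1]$; extending them Lipschitz-continuously to all of $\mathbb R$ (the diffusion by zero, the drift by its endpoint values), standard SDE theory provides a unique strong solution $\tilde Y$ on $[0,\infty)$ for any $y\in(0,1)$.

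Next I would show that $\tilde Y$ stays in $[0,1]$, which makes it a solution of the SDE in the lemma, transfers pathwise uniqueness, and — together with boundedness of $\hat u$, so that $\int_0^T\hat u(Y^{\hat u}_t)^2\,dt<\infty$ — yields admissibility. The decisive point is that the diffusion coefficient vanishes at both endpoints while the drift there points inward: by the boundary condition \eqref{in0}, $b(0)=\hat u(0)=\frac1{2\lambda}\left(q(0^+)-\varepsilon\right)=\sqrt{\beta/\lambda}\ge0$; and by \eqref{decreasingq}, $q(1^-)\le-\varepsilon/(1-\varepsilon)<0$, so $\hat u(1)\le0$ and $b(1)=\hat u(1)\left(1-\varepsilon+\lambda\hat u(1)\right)$ with $1-\varepsilon+\lambda\hat u(1)=1-\tfrac\varepsilon2+\tfrac12\,q(1^-)/(1-q(1^-))>\tfrac12-\tfrac\varepsilon2>0$ for $\varepsilon<1$, whence $b(1)\le0$. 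Choosing the extension of $b$ so that it is $\ge0$ on $(-\infty,0]$ and $\le0$ on $[1,\infty)$, the constant processes $0$ and $1$ become a sub- and a supersolution of the extended SDE, and the one-dimensional comparison theorem (whose hypotheses hold since the diffusion coefficient is Lipschitz) yields $0\le\tilde Y_t\le1$ a.s.\ for all $t$; on $[0,1]$ the extended and original coefficients coincide, so $\tilde Y$ is exactly the solution asserted. Alternatively, one may argue via Feller's boundary classification, as in \cite{guasoni.weber.13}.

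The argument is essentially bookkeeping on top of Lemma~\ref{odesolution}, so I do not expect a genuine analytic obstacle; the two points that require care are (i) the gluing of the $C^1$ pieces of $\hat u$ across $y_-$ and $y_+$ into a single globally Lipschitz function, which is what makes the SDE classically well posed, and (ii) verifying from the explicit boundary value \eqref{in1} that $b(1)\le0$, so that the confinement argument can be closed at the upper endpoint $y=1$ and not merely at $y=0$.
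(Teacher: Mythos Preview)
Your argument is correct in outline and reaches the right conclusion, but it takes a different route from the paper for the confinement step $Y^{\hat u}_t\in[0,1]$. You extend the coefficients Lipschitz-continuously to all of $\mathbb R$, invoke the classical Lipschitz existence/uniqueness theory, and then argue confinement by treating the constants $0$ and $1$ as sub- and supersolutions once the extended diffusion vanishes and the extended drift points weakly inward outside $[0,1]$. The paper instead uses Feller's boundary classification: after recording that $\hat u$ is bounded and continuous with $\hat u(0)>0$, $\hat u(1)<0$, and $\hat u(1)+\varepsilon|\hat u(1)|+\lambda\hat u(1)^2<0$ for small $\varepsilon,\lambda$, it writes down the scale function of $Y^{\hat u}$, checks $s(0^+)=-\infty$ and $s(1^-)=+\infty$, and invokes \cite[Proposition~5.5.22]{karatzas.shreve.91}. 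Your Lipschitz verification for $\hat u$ (via the value matching \eqref{abelend1}--\eqref{abelend} and $yq(y)<1$) is actually more explicit than what the paper records, and is exactly what is needed for strong well-posedness. The one place where your write-up is loose is the comparison step: the constant processes $0$ and $1$ do not solve SDEs whose drifts are globally dominated by (resp.\ dominate) your extended $\tilde b$, so the textbook one-dimensional comparison theorem does not apply verbatim; one has to exploit that the extended diffusion vanishes on $(-\infty,0]\cup[1,\infty)$ (e.g.\ via Tanaka's formula for $(\tilde Y)^-$ and $(\tilde Y-1)^+$, or by localizing at the first exit from $(0,1)$). The scale-function route you mention as an alternative is precisely what the paper does and sidesteps this technicality cleanly.
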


\begin{proof}
Lemma~\ref{odesolution} shows that $\hat{u}(y)$ is a bounded, continuous function on $[0,1]$ which satisfies $\hat{u} (0) >0$ and $\hat{u}(1) <0$, i.e., the strategy buys at full investment and sells at zero investment. Furthermore, notice that the scale function of the process $Y^{\hat{u}}$ is given by 
\begin{equation*}
s(x)= \int_c^x \exp{\left[-2 \int_c^y \frac{z(1-z)(\mu-z \sigma^2)+\hat{u}(z) + \varepsilon z |\hat{u}(z)|+ \lambda z \hat{u}(z)^2}{z^2 (1-z)^2\sigma^2} dz\right]} dy,
\end{equation*}
for $c\in(0,1)$. For sufficiently small $\varepsilon$ and $\lambda$, $\hat{u}(1)+ \varepsilon |\hat{u}(1)|+\lambda \hat{u}(1)^2 <0$. A straightforward computation shows that $s(0^+) = -\infty$ and $s(1^{-}) = \infty$, so that~\cite[Proposition 5.5.22]{karatzas.shreve.91} yields the first assertion. Finally, since $\hat{u}$ and $Y^{\hat{u}}$ are both bounded, the admissibility of the strategy follows.
\end{proof} 

With the function $q$, the constant $\beta$, and the boundaries $y_-,y_+$ of the no-trade region at hand, we can now use a variant of the verification argument\footnote{
The verification argument used in the proof was first used by~Guasoni and Robertson~\cite[Theorem 7]{guasoni.robertson.12} in a general Markovian frictionless setting. It was in turn adapted by Guasoni and Weber~\cite[Lemma A.7]{guasoni.weber.13} to a Black-Scholes model with quadratic trading costs. Here, we extend it to a setting with quadratic and linear trading costs, which leads to an additional nonlinearity in the HJB equation~\eqref{hjbcheck2}.} of Guasoni and Roberston~\cite[Theorem 7]{guasoni.robertson.12} to compute an upper bound for the equivalent safe rate of any admissible strategy:

\begin{mylemma}\label{esrbound}
Let $y\in (0,1)$ be the initial risky weight, define $\beta,q$ as in Lemma~\ref{odesolution}, and set $Q(\xi) = \int_0^\xi q(z) dz$. Then, the terminal wealth $X^u_T$ of any given admissible strategy $u$ satisfies:
\begin{equation}\label{eq:upperbound}
\mathbb{E}[(X^u_T)^{1-\gamma}]^{\frac{1}{1-\gamma}} \leq X_0 e^{\beta T + Q(y)} \mathbb{E}^{\hat{\mathbb{P}}^u} \left[e^{-(1-\gamma) Q(Y^u_T)}\right]^{\frac{1}{1-\gamma}},
\end{equation}
where 
\begin{equation}\label{eq:myopicprob}
\frac{d \hat{\mathbb{P}}^u|_{\mathcal{F}_T}}{d \mathbb{P}|_{\mathcal{F}_T}}= \mathcal{E} \left(\int_0^\cdot (1-\gamma) Y^u_s (1+q(Y^u_s)(1-Y^u_s)) \sigma dW_s\right)_T.
\end{equation}
Moreover, equality holds in \eqref{eq:upperbound} for the strategy $\hat{u}$ from Lemma~\ref{01uhat}.
\end{mylemma}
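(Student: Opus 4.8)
The plan is to use the candidate long-run value function from Section~\ref{heuristic} to produce an It\^o decomposition of the process $N_t:=(X^u_t)^{1-\gamma}e^{(1-\gamma)(\beta(T-t)+Q(Y^u_t))}$ in which the only non-martingale ingredient is a pathwise exponential factor whose sign is controlled by \eqref{eq:odeq}, and then to read off \eqref{eq:upperbound}. Concretely, I would compute $dN_t$ by It\^o's formula, using the dynamics \eqref{dynamicsX}--\eqref{dynamicsY} of $X^u$ and $Y^u$ together with $Q'=q$, $Q''=q'$. The diffusion part comes out as $N_t\,(1-\gamma)\sigma Y^u_t\big(1+q(Y^u_t)(1-Y^u_t)\big)dW_t$, i.e.\ $N_t$ times exactly the integrand appearing in \eqref{eq:myopicprob}; abbreviate $\psi(y):=(1-\gamma)\sigma y\big(1+q(y)(1-y)\big)$. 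After collecting terms, and accounting for the quadratic variation $d\langle\log N\rangle_t$, the drift part equals $N_t(1-\gamma)\Phi(Y^u_t,u_t)\,dt$, where
\[
\Phi(y,u):=-\beta+\mu y-\tfrac{\gamma\sigma^2}{2}y^2+y(1-y)(\mu-\gamma\sigma^2 y)q(y)+\tfrac{\sigma^2}{2}y^2(1-y)^2\big(q'(y)+(1-\gamma)q(y)^2\big)-\lambda u^2-\varepsilon|u|+(u+\varepsilon|u|y+\lambda yu^2)q(y)
\]
is precisely the left-hand side of the long-run HJB equation of Section~\ref{heuristic} with the control $u$ plugged in. Since the $q$ from Lemma~\ref{odesolution} solves \eqref{eq:odeq} --- which is that HJB with the maximum over $u$ carried out via \eqref{optbuyingrate}--\eqref{optsellingrate}, where the condition $q(y)y<1$ is used --- we get $\Phi(y,u)\le 0$ for all $y\in[0,1]$ and all $u\in\mathbb{R}$, with equality at $u=\hat u(y)$. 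Carrying out this drift identification is the only genuinely computational step, and the place where sign errors are easiest to make.

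Next I would integrate. Since $Y^u_t\in[0,1]$ a.s.\ by Lemma~\ref{ybound} and $q$ is continuous, $\psi(Y^u_\cdot)$ is bounded, so $\mathcal{E}\big(\int_0^\cdot\psi(Y^u_s)dW_s\big)$ is a true $\mathbb{P}$-martingale by Novikov's criterion and $\hat{\mathbb{P}}^u$ is a well-defined probability measure equivalent to $\mathbb{P}$ (and $Q$ is bounded on $[0,1]$, so the right-hand side of \eqref{eq:upperbound} is finite and positive). From $dN_t/N_t=(1-\gamma)\Phi(Y^u_t,u_t)\,dt+\psi(Y^u_t)\,dW_t$ one obtains the pathwise representation
\[
N_T=N_0\,\exp\!\Big((1-\gamma)\!\int_0^T\!\Phi(Y^u_s,u_s)\,ds\Big)\,\frac{d\hat{\mathbb{P}}^u}{d\mathbb{P}}\Big|_{\mathcal{F}_T},
\]
with $N_0=X_0^{1-\gamma}e^{(1-\gamma)(\beta T+Q(y))}$ and $N_T=(X^u_T)^{1-\gamma}e^{(1-\gamma)Q(Y^u_T)}$. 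Taking expectations and rewriting the $\mathbb{P}$-expectation of the product with $d\hat{\mathbb{P}}^u/d\mathbb{P}$ as a $\hat{\mathbb{P}}^u$-expectation (all integrands nonnegative, so no integrability caveat here) gives
\[
\mathbb{E}\big[(X^u_T)^{1-\gamma}\big]=X_0^{1-\gamma}e^{(1-\gamma)(\beta T+Q(y))}\,\mathbb{E}^{\hat{\mathbb{P}}^u}\!\Big[e^{-(1-\gamma)Q(Y^u_T)}\exp\!\Big((1-\gamma)\!\int_0^T\!\Phi(Y^u_s,u_s)\,ds\Big)\Big].
\]

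Finally I would conclude by a case distinction on the sign of $1-\gamma$, using $\Phi\le 0$. If $\gamma<1$, then $(1-\gamma)\int_0^T\Phi(Y^u_s,u_s)\,ds\le 0$, so the exponential factor is $\le 1$ and may be dropped, yielding $\mathbb{E}[(X^u_T)^{1-\gamma}]\le X_0^{1-\gamma}e^{(1-\gamma)(\beta T+Q(y))}\mathbb{E}^{\hat{\mathbb{P}}^u}[e^{-(1-\gamma)Q(Y^u_T)}]$; raising to the positive power $1/(1-\gamma)$ preserves the inequality and gives \eqref{eq:upperbound}. If $\gamma>1$, the exponential factor is instead $\ge 1$, so the same step gives the reverse inequality for $\mathbb{E}[(X^u_T)^{1-\gamma}]$ (possibly $+\infty$, which causes no harm), and raising to the negative power $1/(1-\gamma)$ flips it back to \eqref{eq:upperbound}. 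For $u=\hat u$, Lemma~\ref{01uhat} guarantees that $Y^{\hat u}$ exists as a strong solution valued in $[0,1]$, and $\Phi(Y^{\hat u}_s,\hat u(Y^{\hat u}_s))\equiv 0$ makes the exponential factor identically $1$, so every inequality above becomes an equality and the last assertion follows. The main obstacle is the It\^o/HJB drift identification in the first step; the rest is bookkeeping, the only delicate point being to track the sign of $1-\gamma$ consistently when passing to the power $1/(1-\gamma)$.
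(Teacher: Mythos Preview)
Your proposal is correct and follows essentially the same approach as the paper: both apply It\^o's formula to the candidate value function, identify the drift as the HJB expression $\Phi(y,u)$, use Novikov's criterion (via the boundedness of $q$ and $Y^u\in[0,1]$) to justify the change of measure, and conclude from $\Phi\le 0$ with equality at $\hat u$. The only cosmetic difference is that the paper takes logarithms and establishes the pathwise inequality $\log X_T-\log X_0-\tfrac{1}{1-\gamma}\log(d\hat{\mathbb P}^u/d\mathbb P)\le \beta T-Q(Y_T)+Q(y)$ directly, whereas you keep the exact multiplicative identity for $N_T$ and only invoke $\Phi\le 0$ after taking expectations, making the case split on the sign of $1-\gamma$ explicit.
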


\begin{proof}
Fix an admissible strategy $u$ and omit the $u-$dependence of $X$, $Y$, and $\hat{\mathbb{P}}$ for the sake of clarity in the rest of the proof. Lemma~\ref{odesolution}, Lemma~\ref{ybound}, and Novikov's Condition imply that the stochastic exponential on the right-hand side of \eqref{eq:myopicprob} is a true martingale and therefore the density process of $\hat{\mathbb{P}}^u$ with respect to $\mathbb{P}$. 

Now, one readily checks that the assertion follows from
\begin{equation}\label{refor}
\log{X_T}-\log{X_0} -\frac{1}{1-\gamma} \log{\left(\frac{d \hat{\mathbb{P}}}{d \mathbb{P}}\right)} \leq \beta T -Q(Y_T)+ Q(y).
\end{equation}
To verify~\eqref{refor}, recall the dynamics of the wealth process $X$ and the risky weight $Y$ from (\ref{dynamicsX}-\ref{dynamicsY}) and apply It\^o's formula to $Q(Y_T)$ and $\log(X_T)$, obtaining: 
\begin{align}
Q(Y_T) - Q(y) =& \int_0^T q(Y_t)(Y_t(1-Y_t)(\mu-Y_t \sigma^2)+u_t + \varepsilon |u_t| Y_t+ \lambda Y_t u_t^2) dt\nonumber\\
 &  + \int_0^T \frac{1}{2}q'(Y_t) Y_t^2 (1-Y_t)^2 \sigma^2 dt+ \int_0^T q(Y_t)Y_t(1-Y_t) \sigma d W_t,\label{ito1}\\
\log{X_T}-\log{X_0} =& \int_0^T \left(Y_t \mu-\varepsilon |u_t|-\lambda u_t^2-\frac{1}{2}\sigma^2 Y_t^2\right) dt + \int_0^T \sigma Y_t d W_t.\label{ito2}
\end{align}
After substituting~(\ref{ito1}-\ref{ito2}) into~\eqref{refor}, this inequality reads as
\begin{align*}
&\int_0^T \mu Y_t-\varepsilon |u_t| -\lambda u_t^2-\frac{1}{2} \sigma^2 Y_t^2+ \frac{1}{2}\sigma^2 (1-\gamma) Y_t^2 (1+ q(Y_t)(1-Y_t))^2 dt\\
&\qquad \leq \int_0^T \left(\beta-q(Y_t)(Y_t(1-Y_t)(\mu-Y_t \sigma^2)+ u_t + \varepsilon |u_t|Y_t+ \lambda Y_t u_t^2)-\frac{1}{2} \sigma^2 Y_t^2(1-Y_t)^2q'(Y_t)\right) dt.
\end{align*}
Hence, it remains to verify that, for all $u\in \rr$ and $y \in [0,1]$:
\begin{equation}\label{hjbcheck1}
\begin{split}
&\mu y -\varepsilon |u| -\lambda u^2 -\frac{\sigma^2}{2} y^2+ \frac{1-\gamma}{2} \sigma^2 y^2 (1+ q(y)(1-y))^2 \\
&\qquad \leq \beta - q(y)(y(1-y)(\mu-y \sigma^2)+ u +\varepsilon |u| y+\lambda y u^2)-\frac{\sigma^2}{2}y^2(1-y)^2q'(y). 
\end{split}
\end{equation}
Rearranging~\eqref{hjbcheck1}, it suffices to check that, for all $u \in \mathbb{R}$ and $y \in [0,1]$:
\begin{equation}\label{hjbcheck2}
\begin{split}
0 &\geq -\beta+ \mu y -\frac{\gamma \sigma^2}{2}y^2+y (1-y) (\mu-\gamma \sigma^2 y) q + \frac{\sigma^2}{2} y^2 (1-y)^2 (q'+(1-\gamma) q^2)\\
 &  \qquad -\lambda u^2- \varepsilon |u| + (u+ \varepsilon |u| y + \lambda y u^2)q.
\end{split}
\end{equation}
Maximizing $-\lambda u^2- \varepsilon |u| + (u+ \varepsilon |u| y + \lambda y u^2)q$ over $u$ shows that the maximum\footnote{The condition $q(y)y < 1$ on $[0,1]$ guarantees that the critical point is indeed a maximum.} is attained at
\begin{equation*}
\tilde{u}(y)=
\begin{cases}
\frac{1}{2\lambda}\left(\frac{q(y)}{1-y q(y)}-\varepsilon\right), &\text{if } \frac{q(y)}{1-y q(y)}\geq\varepsilon,\\
0, &\text{if } -\varepsilon\leq\frac{q(y)}{1-y q(y)}\leq\varepsilon,\\
\frac{1}{2\lambda}\left(\frac{q(y)}{1-y q(y)}+\varepsilon\right), &\text{if } \frac{q(y)}{1-y q(y)}\leq-\varepsilon.
\end{cases}
\end{equation*}
In view of~\eqref{decreasingq},
\begin{equation*}
\tilde{u}(y)=
\begin{cases}
\frac{1}{2\lambda}\left(\frac{q(y)}{1-y q(y)}-\varepsilon\right), &\text{if } y\in [0,y_-),\\
0, &\text{if } y \in [y_-,y_+],\\
\frac{1}{2\lambda}\left(\frac{q(y)}{1-y q(y)}+\varepsilon\right), &\text{if } y\in (y_+,1].
\end{cases}
\end{equation*}
Now, the inequality \eqref{hjbcheck2} follows after substituting the ODE \eqref{eq:odeq} for $q$ and using the maximality of $\tilde{u}$. Evidently, this inequality becomes an equality for the strategy $\hat{u}$ from Lemma \ref{01uhat}.
\end{proof}

To complete the proof of Theorem~\ref{mainresult} we now verify that, as $T \rightarrow \infty$, the upper bound in Lemma~\ref{esrbound} converges to $\beta$ for any admissible strategy, and is attained for $\hat{u}$ from Lemma \ref{01uhat}. 

\begin{proof}[Proof of Theorem~\ref{mainresult}]
Let $\beta$ and $q$ be defined as in Lemma~\ref{odesolution} and let $u$ be an arbitrary admissible strategy. By Lemma~\ref{ybound}, we have $Y^u_t \in [0,1]$ for all $t$. As $q$ is bounded on $[0,1]$ due to Lemma~\ref{odesolution}, the function $Q(\xi) = \int_0^\xi q(z) dz$ is also bounded on $[0,1]$. Thus, for every admissible strategy, we have
\begin{equation*}
\lim_{T\rightarrow \infty} \frac{1}{(1-\gamma)T}\log{\mathbb{E}^{\hat{\mathbb{P}}^u}}[e^{-(1-\gamma) (Q(Y^u_T)-Q(y))}]=0.
\end{equation*}
As $T\rightarrow \infty$, Lemma~\ref{esrbound} therefore provides a strategy-\emph{independent} upper bound for the equivalent safe rate:
\begin{equation*}
\esr_\gamma (u) = \lim_{T\rightarrow \infty} \frac{1}{(1-\gamma) T}\log{\mathbb{E}[(X^u_T)^{1-\gamma}]} \leq \beta.
\end{equation*}
This upper bound is attained for the admissible strategy $\hat{u}$ from Lemma~\ref{01uhat}. Hence, the latter is long-run optimal with equivalent safe rate $\beta$ as claimed.
\end{proof}

\subsection{Proof of Propositions~\ref{limit_esr}, \ref{prop_asymptotics}, and \ref{prop_convergence}}\label{proof:prop_asymptotics}

In this final section, we provide proofs for the asymptotic results from Section~\ref{sec:asymptotics}. 

\begin{mylemma}\label{q_decreasing}
Assume $\lambda=K\varepsilon^{4/3}$. For sufficiently small $\varepsilon$, the solution $q(y)$ defined in Theorem~\ref{mainresult} is strictly decreasing.
\end{mylemma}

\begin{proof}
Rewrite equation~\eqref{eq:odeq} as $q'(y)=f(y,q(y))$. In the proof of Theorem~\ref{mainresult} we show that there exists a function $h(y)$ defined close to $0$ such that $f(y,h(y))=0$, $h(0^+)=b_0$ and $h'(0^+)<0$. Notice that $f(y,\varepsilon\pm 2\sqrt{K\beta}\varepsilon^{2/3})<0$, therefore $|h(y)-\varepsilon|\leq2\sqrt{K\beta}\varepsilon^{2/3}$, and in particular $h(y)<\frac{1}{y}$ for small $\varepsilon$. The equation $f(y,h)=0$ can then be seen as a cubic equation in $h$ (that reduces to a quadratic equation when $h<\frac{\varepsilon}{1+\varepsilon y}$), and there exists $N_0$ such that eventually (when $\varepsilon$ goes to $0$) this equation has positive discriminant on $(0,y_*-N_0 \sqrt{c(\varepsilon)})$, where $c(\varepsilon):=\frac{\mu^2}{2\gamma\sigma^2}-\beta$ (and discriminant equal to $0$ at $y_*-N_0 \sqrt{c(\varepsilon)}$). In this case there are explicit expressions for the three real roots of $f(y,h)=0$ in terms of $y$, one of these roots extends the function $h(y)$ to a continuous function on the whole interval $(0,y_*-N_0 \sqrt{c(\varepsilon)})$. Notice that on $(y_*-N_0 \sqrt{c(\varepsilon)},y_*)$ the cubic equation has only one real root that does not lie in $(b_1,b_0)$, in other terms $f(y,q)<0$ on $(y_*-N_0 \sqrt{c(\varepsilon)},y_*)\times(b_1,b_0)$.\\
Using the fact that $|h(y)-\varepsilon|\leq2\sqrt{K\beta}\varepsilon^{2/3}$, we see that there does not exist any $y\in(0,y_*-N_0 \sqrt{c(\varepsilon)})$ such that simultaneously $\frac{d}{dy}f(y,h(y))=0$ and $h'(y)=0$. This implies that $h(y)$ is monotone on its whole definition interval, in particular (since $h'(0^+)<0$) it is a decreasing function.\\
The solution $q(y)$ to~\eqref{eq:odeq} defined in Theorem~\ref{mainresult} is such that $q(0^+)=b_0$ and $q'(0^+)=0$, while $h(0^+)=b_0$ and $h'(0^+)<0$, therefore close to $0$ we have $q>h$. Since $h$ is a subsolution, i.e., $0=f(y,h(y))>h'(y)$, we have $q(y)>h(y)$ on $(0,y_*-N_0 \sqrt{c(\varepsilon)})$. This implies that $q'(y)=f(y,q(y))<0$ on $(0,y_*-N_0 \sqrt{c(\varepsilon)})$.\\
A similar argument proves that $q'(y)<0$ on $(y_*+N_1 \sqrt{c(\varepsilon)},1)$ for some $N_1>0$. To conclude, it is enough to observe that $f(y,q)<0$ on $(y_*-N_0 \sqrt{c(\varepsilon)},y_*+N_1 \sqrt{c(\varepsilon)})\times(b_1,b_0)$.
\end{proof}

Proposition \ref{limit_esr} shows that the equivalent safe loss, i.e., the positive value $c(\varepsilon):=\frac{\mu^2}{2\gamma\sigma^2}-\esr_\gamma(\hat u)$, is of order $\varepsilon^{2/3}$. The lower bound for $c(\varepsilon)$ is obtained by comparing it to the loss that would occur if the investor were facing only quadratic costs, while the upper bound is obtained by adapting arguments for a model with nonlinear trading costs of power form~\cite{guasoni.weber.nonlinear}:

\begin{proof}[Proof of Proposition~\ref{limit_esr}]

For any admissible strategy $u$, let $\esr^{q.c.}_\gamma(u)$ be the equivalent safe rate for the model with quadratic costs only as in \cite{guasoni.weber.13}. It is clear that for any admissible strategy $u$, $\esr^{q.c.}_\gamma(u)\geq\esr_\gamma(u)$. In~\cite{guasoni.weber.13} it is proved that the maximal equivalent safe rate in the model with only quadratic cost is of the form $\frac{\mu^2}{2\gamma\sigma^2}-\varepsilon^{2/3}l^{q.c.} + o(\varepsilon^{2/3})$ for some positive constant $l^{q.c.}$. This implies that $\liminf_{\varepsilon\rightarrow0}\frac{c(\varepsilon)}{\varepsilon^{2/3}}\geq l^{q.c}>0$.

We will now prove that $\limsup_{\varepsilon\rightarrow0}\frac{c(\varepsilon)}{\varepsilon^{2/3}}<+\infty$. To this end, we assume that there exists $\varepsilon_n\downarrow0$ such that $\lim_{n\rightarrow\infty}\frac{c(\varepsilon_n)}{\varepsilon_n^{2/3}}=+\infty$ and and show that this leads to a contradiction.

Let $q(\cdot)$ be the solution to equation~\eqref{eq:odeq} as in Theorem~\ref{mainresult}. Define $y_l^n:=y_*-\sqrt{\frac{2}{\gamma\sigma^2}(1+\delta)c(\varepsilon_n)}$ for some fixed (i.e., independent of $\varepsilon$) and positive $\delta$ and let $\tilde y^n$ be the only point such that $q(\tilde y^n)=0$. Assume that, up to a subsequence, $\tilde y^n\geq y_*$. (If such a subsequence does not exist, the same argument with $y_r^n:=y_*+\sqrt{\frac{2}{\gamma\sigma^2}(1+\delta)c(\varepsilon_n)}$ can be used.) We divide the rest of the proof into two parts:\\

(i) First we prove that the assumption $\lim_{n\rightarrow\infty}\frac{c(\varepsilon_n)}{\varepsilon_n^{2/3}}=+\infty$ implies that there exists a subsequence $n_k$ such that
\begin{equation}\label{2.1}
\lim_{k\rightarrow\infty}\frac{(q(y_l^{n_k})-\varepsilon)^2}{\varepsilon^{4/3}_{n_k}c(\varepsilon_{n_k})}=+\infty.
\end{equation}
Recall that $q(0^+)=\varepsilon+2\sqrt{K\beta}\varepsilon^{2/3}=:b_0(\varepsilon)$ and that $q(\cdot)$ is decreasing and therefore $q(y)\leq b_0$ on $[0,1]$. If there is $C_1>1+2\sqrt{K\frac{\mu^2}{2\gamma\sigma^2}}$ independent of $\varepsilon$ such that eventually $q(y_l^n)\geq C_1 c(\varepsilon_n)$, then to conclude it is enough to note that for large $n$ eventually $b_0(\varepsilon_n)<C_1 \varepsilon_n^{2/3}<C_1 c(\varepsilon_n)\leq q(y_l^n)$, where the second inequality follows from the assumption that $\lim_{n\rightarrow\infty}\frac{c(\varepsilon_n)}{\varepsilon_n^{2/3}}=+\infty$. This contradicts the monotonicity of $q(\cdot)$, thus we can assume (up to a subsequence) that $q(y_l^n)< C_1 c(\varepsilon_n)$.
With this assumption, and also using the monotonicity of $q(y)$ established in Lemma~\ref{q_decreasing}, on the interval $( y_*-\sqrt{\frac{2}{\gamma\sigma^2}(1-\delta)c(\varepsilon)}, y_*)$ we obtain from Equation~\eqref{eq:odeq} that, for some constant $C_2$,
\begin{equation*}
\frac{\sigma^2}{2}y^2(1-y)^2 q'(y)<(1-\delta)c(\varepsilon)-c(\varepsilon)+0+C_2 c(\varepsilon)^2+0.
\end{equation*}
Thus, there is $C_3>0$ such that $q'(y)<-C_3 c(\varepsilon)$ on $( y_*-\sqrt{\frac{2}{\gamma\sigma^2}(1-\delta)c(\varepsilon)}, y_*)$ for sufficiently small $\varepsilon$. As a consequence, it follows from the definition of of $y_l^n$ and the assumption $\tilde y^n\geq y_*$ that
\begin{align*}
q(y_l^n)&=-\int_{y_l^n}^{\tilde y^n}q'(y)dy\geq-\int_{ y_*-\sqrt{\frac{2}{\gamma\sigma^2}(1-\delta)c(\varepsilon)}}^{ y_*}q'(y)dy\\&>\int_{ y_*-\sqrt{\frac{2}{\gamma\sigma^2}(1-\delta)c(\varepsilon)}}^{ y_*}C_3 c(\varepsilon)dy=C_3 \sqrt{\frac{2}{\gamma\sigma^2}(1-\delta)}c(\varepsilon)^{3/2}.
\end{align*}
Recall that $y_-$ is by definition the only point such that $q(y_-)=\frac{\varepsilon}{1+\varepsilon y_-}$. Since $\frac{c(\varepsilon_n)^{3/2}}{\varepsilon_n}\uparrow\infty$ and therefore $q(y_l^n)>\frac{\varepsilon}{1+\varepsilon y_l^n}$, this implies that eventually $y_l^n<y_-$. Furthermore we get that, for some $C_4$ and $C_5$,
\begin{equation*}
 \frac{(q(y_l^n)-\varepsilon_n)^2}{\varepsilon^{4/3}_n c(\varepsilon_n)}\geq \frac{(C_4 c(\varepsilon_n)^{3/2}-\varepsilon_n)^2}{\varepsilon^{4/3}_n c(\varepsilon_n)}\geq \frac{C_5 c(\varepsilon_n)^3}{\varepsilon^{4/3}_n c(\varepsilon_n)}= C_5 \left(\frac{c(\varepsilon_n)}{\varepsilon_n^{2/3}}\right)^2\uparrow\infty,
\end{equation*}
which proves point (i).\\

(ii) We next show that the assumption that $\lim_{n\rightarrow\infty}\frac{c(\varepsilon_n)}{\varepsilon_n^{2/3}}\uparrow\infty$, together with \eqref{2.1} implies that, for some $\eta\in(0, y_l^n)$ and sufficiently small $\varepsilon_n$,
\begin{equation*}
q(\eta)>q(0^+)=b_0(\varepsilon_n).
\end{equation*}
This contradicts the monotonicity of $q(y)$, thereby producing the desired contradiction.

Fix a large constant $M>0$. Since $\lim_{n\rightarrow\infty} \frac{(q(y_l^n)-\varepsilon_n)^2}{\varepsilon_n^{4/3}c(\varepsilon_n)}=+\infty$, for sufficientely small $\varepsilon_n$ we have
\begin{equation*}
\frac{(q(y_l^n)-\varepsilon_n)^2}{\varepsilon_n^{4/3}}>M\delta c(\varepsilon_n)=M\left(\frac{\gamma\sigma^2}{2}(y_l^n- y_*)^2-c(\varepsilon_n)\right).
\end{equation*}
Define $y_{1,n}:=\inf\{y\in(0, y_*):-\frac{\sigma^2}{2}y^2(1-y)^2(1-\gamma)q^2(y)>\frac{\gamma\sigma^2}{2}(y- y_*)^2-c(\varepsilon_n)\}>0$, with the usual convention $\inf\emptyset=+\infty$. Assume that eventually $y_{1,n}<y_l^n$. Define $p_n:=-\frac{\sigma^2}{2}y_{1,n}^2(1-y_{1,n})^2(1-\gamma)$; we want to show that also in $y_{1,n}$ this inequality holds:
\begin{equation*}
\frac{(q(y_{1,n})-\varepsilon_n)^2}{\varepsilon_n^{4/3}}>M\left(\frac{\gamma\sigma^2}{2}(y_{1,n}- y_*)^2-c(\varepsilon_n)\right)=p_n M q^2(y_{1,n}),
\end{equation*}
where we used the definition of $y_{1,n}$ in the last equality. This inequality is trivially satisfied if $p_n<0$ (i.e., when $\gamma<1$), so assume $p_n$ is positive. Recall from part (i) of the proof that $q(y_{1,n})>q(y_l^n)\geq C_4 c(\varepsilon_n)^{3/2}$ and therefore that eventually $q(y_{1,n})>\varepsilon$. Taking square roots (using $q(y_{1,n})>\varepsilon>0$) and rearranging the terms in the inequality to be proved, we obtain the equivalent inequality $q(y_{1,n})>\frac{\varepsilon_n}{1-\sqrt{p_n M}\varepsilon_n^{2/3}}$. Since $q(y_{1,n})> C_4 c(\varepsilon_n)^{3/2}$, it is sufficient to show that $C_4 c(\varepsilon_n)^{3/2}\geq\frac{\varepsilon_n}{1-\sqrt{p_n M}\varepsilon^{2/3}}$. This can be rewritten as $C_4 \left(\frac{c(\varepsilon_n)}{\varepsilon_n^{2/3}}\right)^{3/2}\geq\frac{1}{1-\sqrt{p_n M}\varepsilon^{2/3}}$, which holds true for small $\varepsilon_n$ since $\frac{c(\varepsilon_n)}{\varepsilon_n^{3/2}}\uparrow\infty$ and $p_n$ is uniformly bounded in $\varepsilon_n$.

If eventually $y_{1,n}<y_l^n$, let $y_{2,n}$ be equal to $y_{1,n}$. Otherwise, up to a subsequence, $y_l^n<y_{1,n}$ and then define $y_{2,n}:=y_l^n$. We have just shown that in both cases
\begin{equation*}
\frac{(q(y_{2,n})-\varepsilon_n)^2}{\varepsilon_n^{4/3}}>M\left(\frac{\gamma\sigma^2}{2}(y_{2,n}- y_*)^2-c(\varepsilon_n)\right)
\end{equation*}
and $y_{2,n}\leq y_{1,n}$.

Choose $\eta$ such that $\eta<y_{2,n}$ for every $n$ (this is possible because for $y\downarrow0$, $-\frac{\sigma^2}{2}y^2(1-y)^2(1-\gamma)q^2(y)\downarrow0$ and $q(y)$ is uniformly bounded in $\varepsilon$ and $\frac{\gamma\sigma^2}{2}(y- y_*)^2-c(\varepsilon_n)\uparrow\beta>0$). On the interval $(\eta,y_{2,n})$ we have
\begin{multline}\label{ineqtilde}
\frac{\sigma^2}{2}\frac{1}{16}q'(y)\leq\frac{\sigma^2}{2}y^2(1-y)^2q'(y)\\<\frac{\gamma\sigma^2}{2} (y- y_*)^2-c(\varepsilon_n)+0-\frac{\sigma^2}{2}y^2(1-y)^2(1-\gamma)q^2(y)-\frac{(q(y)-\varepsilon_n)^2}{4K\varepsilon_n^{4/3}}\\
\leq 2\left[\frac{\gamma\sigma^2}{2} (y- y_*)^2-c(\epsilon_n)\right]-\frac{(q(y)-\varepsilon_n)^2}{4K\varepsilon_n^{4/3}},
\end{multline}
where the first inequality follows from $y^2(1-y)^2\leq\frac{1}{16}$, the second from $y< y_*$ and $q>0$, the third from $y_{2,n}\leq y_{1,n}$ and the definition of $y_{1,n}$. 
Define $\tilde q(y)$ on $(\eta,y_{2,n})$ as the solution of the Cauchy problem
\begin{align*}
\tilde q'(y)&=\tilde f(y,\tilde q(y))\\:&=\frac{32}{\sigma^2} \left(2\left[\frac{\gamma\sigma^2}{2} (y- y_*)^2-c(\varepsilon_n)\right]-\frac{(q(y)-\varepsilon_n)^2}{4K\varepsilon_n^{4/3}}\right),\\
\tilde q(y_{2,n})&=q(y_{2,n}).
\end{align*}
The inequality \eqref{ineqtilde} implies that $q'(y)<\tilde f(y,q(y))$. Thus, $\tilde q(y)<q(y)$ on $(\eta,y_{2,n})$.
Define on $(\eta,y_{2,n})$ the function $k(y)>\varepsilon_n$ by
\begin{equation*}
\frac{(k(y)-\varepsilon_n)^2}{\varepsilon_n^{4/3}}=M\left(\frac{\gamma\sigma^2}{2} (y- y_*)^2-c(\varepsilon_n)\right).
\end{equation*}
To conclude the proof it remains to show that $k(y)$ is a subsolution for $\tilde q(y)$ on $(\eta,y_{2,n})$, which implies that $k(\eta)<q(\eta)$. In particular, if $M$ is chosen sufficiently large, we get the contradiction.

Thus, it remains to show that $\tilde f(y,k(y))<k'(y)$ on $(\eta,y_{2,n})$, i.e.,
\begin{multline*}
\frac{32}{\sigma^2} \left(2-\frac{M}{4K}\right)\left[\frac{\gamma\sigma^2}{2} (y- y_*)^2-c(\varepsilon_n)\right]< \frac{\sqrt{M}}{2}\varepsilon_n^{2/3}\left[\frac{\gamma\sigma^2}{2} (y- y_*)^2-c(\varepsilon_n)\right]^{-1/2}\gamma\sigma^2(y-y_*).
\end{multline*}
This is equivalent to
\begin{equation}\label{eq:lemmaesr}
\frac{64}{\gamma\sigma^4} \left(\frac{\sqrt{M}}{4K}-\frac{2}{\sqrt{M}}\right)> \varepsilon_n^{2/3}\left[\frac{\gamma\sigma^2}{2} (y- y_*)^2-c(\varepsilon_n)\right]^{-3/2}(y_*-y)=:r(y).
\end{equation}
A simple calculation shows that $r'(y)>0$ on $(0,y_l^n)$. Thus, $r(y)\leq r(y_l^n)$ on $(\eta,y_{2,n})$, where
\begin{equation*}
r(y_l^n)=\varepsilon_n^{2/3}(\delta c(\varepsilon_n))^{-3/2}\left(\frac{2}{\gamma\sigma^2}(1+\delta)c(\varepsilon_n)\right)^{1/2}=D_1 \frac{\varepsilon_n^{2/3}}{c(\varepsilon)}\downarrow0,
\end{equation*}
with $D_1>0$. If $M$ is large enough, the left-hand side in (\ref{eq:lemmaesr}) is strictly positive. Since $r(y)$ converges to $0$ on $(\eta,y_{2,n})$ uniformly in $\varepsilon$, for sufficiently small $\varepsilon$ inequality (\ref{eq:lemmaesr}) holds true. This completes the proof.
\end{proof}

Next, we establish existence and uniqueness for the inhomogeneous Riccati equation that determines the small-cost asymptotics in 
Section~\ref{sec:asymptotics}. To this end, we first prove an auxiliary result about Riccati ODEs:

\begin{mylemma}\label{lemma_riccati}
The Riccati equation
\begin{equation}\label{lemma_eq}
y'(x)=f(x,y(x)):=-ax^2+b+cy^2(x),
\end{equation}
with $a,b,c>0$ has a unique solution such that
\begin{equation}\label{lemma_growth}
\lim_{x \rightarrow \infty} \frac{y(x)}{\sqrt{a/c} x}=1.
\end{equation}
Furthermore, in equation \eqref{lemma_eq} replace the parameter $b$ with respectively $b_1$ and $b_2$, and consider the corresponding unique solutions $y_1(x)$ and $y_2(x)$ that satisfy \eqref{lemma_growth}. If $b_1<b_2$, then $y_1(x)>y_2(x)$.
\end{mylemma}

\begin{proof}
On $(\sqrt{b/a},+\infty)$, define the function $h(x):=\sqrt{a x^2/c-b/c}$. Notice that by definition of $h(x)$ we have $f(x,h(x))=0$. For each $\bar x\in (\sqrt{b/a},+\infty)$ consider the solution $y(x;\bar x, h(\bar x))$ with initial condition $(\bar x, h(\bar x))$ and define $y_*(x):=\sup\{y(x;\bar x,h(\bar x)):\bar x\in (\sqrt{b/a},+\infty)\}$.

For any $x_1$ there is a large $y_1$ such that the linear function $\tilde y(x)=y_1+\sqrt{a/c}(x-x_1)$ is a subsolution to \eqref{lemma_eq} whose graph does not intersect the graph of $h(x)$. In particular, the solution $y(x;x_1,y_1)$ to \eqref{lemma_eq} with initial condition $(x_1,y_1)$ is (on its definition interval) strictly larger than $h(x)$. Since for any $\bar x$ in its definition interval $y(\bar x;x_1,y_1)> h(\bar x)=y(\bar x;\bar x, h(\bar x))$, we have also that $y_1=y(x_1;x_1,y_1)> y(x_1;\bar x, h(\bar x))$ and thus $+\infty>y_1\geq y_*(x_1)$. This argument can be repeated for any $x_1\in(\sqrt{b/a},+\infty)$, hence $y_*(x)<+\infty$ on $(\sqrt{b/a},+\infty)$.

We want to prove that $y_*(x)$ is the unique solution that satisfies \eqref{lemma_growth}. By construction, $y_*(x)$ has the following properties:
\begin{itemize}[i)]
\item[i)] $y_*(x)\geq h(x)$;
\item[ii)] $\left(\sqrt{\frac{b}{a}},+\infty\right)\subset D$, where $D$ is the domain of $y_*(x)$.
\end{itemize}
From property (i) it follows that $\liminf_{x \rightarrow \infty} \frac{y_*(x)}{\sqrt{a/c} x}\geq 1.$

Next we show that $L:=\lim_{x \rightarrow \infty} \frac{y_*(x)}{\sqrt{a/c} x}$ exists. Assume $\limsup_{x \rightarrow \infty} \frac{y_*(x)}{\sqrt{a/c} x}=:M\in(1,+\infty)$ (the case $M=+\infty$ is analogous). Then, there is a sequence $(x_n)_{n\geq0}$ such that $\lim_{n \rightarrow \infty} \frac{y_*(x_n)}{\sqrt{a/c} x_n}=M$. In particular, for any $\delta\in(0,M-1)$ there exists $N_\delta \in \mathbb{N}$ such that $\forall n \geq N_\delta$ we have $y_*(x_n)\geq(M-\delta)\sqrt{a/c} x_n$. For large $x$, the function $s(x)=(M-\delta)\sqrt{a/c} x$ is a subsolution to \eqref{lemma_eq}, because
\begin{equation*}
(M-\delta)\sqrt{\frac{a}{c}}=s'(x)\leq -ax^2+b+cs^2(x)=ax^2((M-\delta)^2-1)+b.
\end{equation*}
Thus, for every $\delta\in(0,M-1)$ and some $\bar x$, we have $y_*(x)\geq(M-\delta)\sqrt{a/c} x$ for $x\geq\bar x$. In particular, $\liminf_{x \rightarrow \infty} \frac{y_*(x)}{\sqrt{a/c} x}\geq M-\delta$ for any small $\delta$, and $\liminf_{x \rightarrow \infty} \frac{y_*(x)}{\sqrt{a/c} x}=M=\limsup_{x \rightarrow \infty} \frac{y_*(x)}{\sqrt{a/c} x}$. In other terms, the limit $L$ exists.
 
We prove next that $L=1$. First, assume by contradiction that $1<L<+\infty$. 
Since $\lim_{x \rightarrow \infty} \frac{y_*(x)}{\sqrt{a/c} x}=L<+\infty$, the function $y_*(x)$ grows linearly. On the other hand 
from \eqref{lemma_eq} one gets
\begin{equation*}
\lim_{x\rightarrow\infty}\frac{y_*'(x)}{a x^2}=L-1>0,
\end{equation*}
which implies that $y'_*(x)$ grows quadratically, leading to a contradiction.

Assume now that $L=+\infty$. From \eqref{lemma_eq} it follows that
\begin{equation*}
\lim_{x\rightarrow\infty}\frac{y_*'(x)}{c y_*^2(x)}=1.
\end{equation*}
For small $\delta$ and sufficiently large $x$, we have $(1-\delta)c y_*^2(x)\leq y_*'(x)$. This implies that $y_*(x)$ is bounded from below by a positive function of the form $\frac{1}{k-(1-\delta)cx}$ for some $k>0$. In particular, $y_*(x)$ would have a vertical asymptote, contradicting property (ii). This proves that $L=1$.\\

The next step is to prove uniqueness. Consider a sufficiently small $\delta>0$ and $\bar x$ such that for any $x\geq\bar x$:
\begin{equation*}
\frac{1}{\sqrt{ac}x^2}\leq\delta \qquad \text{and}\qquad \frac{y_*(x)}{\sqrt{a/c} x}\geq 1-\delta.
\end{equation*}
For any $d>0$, consider the function $w(x;d)=y_*(x)+dx$. We now show that for $x\geq\bar x$, $w(x;d)$ is a subsolution to \eqref{lemma_eq}, i.e., $w'(x;d)\leq f(x,w(x;d))$. Since $y_*(x)$ is a solution to \eqref{lemma_eq}, this inequality is equivalent to $d\leq c d^2x^2+2 cdy_*(x)x$, which can be rearranged to
\begin{equation*}
\frac{1}{\sqrt{ac}x^2}\leq \sqrt{\frac{c}{a}} d+2 \sqrt{\frac{c}{a}}\frac{y_*(x)}{x}.
\end{equation*}
Since $x\geq\bar x$, this inequality follows from $\delta\leq \sqrt{\frac{c}{a}} d+2 -2\delta$, given that $\delta$ was chosen appropriately. Thus, $w(x;d)$ is a subsolution for any $d>0$. In particular, let $y_2(x)>y_*(x)$ be a solution to \eqref{lemma_eq} and choose $d_*$ such that $y_2(\bar x)=y_*(\bar x)+d_*\bar x$. Then $y_2(x)\geq w(x;d_*)=y_*(x)+d_* x$ for $x\geq \bar x$ and $y_2(x)$ cannot satisfy \eqref{lemma_growth}. Since any solution smaller than $y_*(x)$ is also -for large $x$- smaller than $h(x)$ and thus eventually decreasing, this is enough to prove uniqueness.

Finally, define as before $h_1(x)=\sqrt{ax^2/c-b_1/c}$ and $h_2(x)=\sqrt{ax^2/c-b_2/c}$. Since $h_1(x)> h_2(x)$ and since any solution to equation \eqref{lemma_eq} with coefficient $b_1$ is a subsolution for equation \eqref{lemma_eq} with coefficient $b_2$, the solution $y_1(x,\bar x)$ to the first equation with initial condition $(\bar x,h_1(\bar x))$ is above the solution $y_2(x,\bar x)$ to the second equation with initial condition $(\bar x,h_2(\bar x))$ on $(-\infty,\bar x]$. It follows that $y_1(x)>y_2(x)$.
\end{proof}

\begin{proof}[Proof of Proposition~\ref{prop_asymptotics}]
For any $l>0$, define $r_B(z;l)$ as the unique solution of \eqref{leftr} that satisfies 
\begin{equation}\label{as_growth}
\lim_{z \rightarrow -\infty} \frac{r_B(z)}{-\sqrt{2 K \gamma \sigma^2} z}=1. 
\end{equation}
Here, existence and uniqueness of this solution follow from Lemma \ref{lemma_riccati}. Let $r(z;l)$ be the unique solution to \eqref{midr} with initial condition $r(0;l)=0$. It is enough to prove that there exists a unique $l$ such that for some $z_-<0$ we get $r_B(z_-;l)=r(z_-;l)=1$, $r_B(z_-;l)>1$ on $(-\infty,z_-)$ and $r(z_-;l)<1$ on $(z_-,0]$.

Define $l_\lambda:=\sqrt{\frac{\gamma K}{2}}\sigma^3y_\ast^2(1-y_\ast)^2$ and $l_\varepsilon:=\left(\frac{3}{4}\sqrt{\frac{\gamma}{2}}\sigma^3y_\ast^2(1-y_\ast)^2\right)^{2/3}$. If the following values exist, define $z_B(l)$ the (only) value such that $r_B(z_B(l);l)=1$ and $r'_B(z_B(l);l)<0$, and $z_{NT}(l)$ the (only) value such that $r(z_{NT}(l);l)=1$ and $r'(z_{NT}(l);l)<0$.\\
If $l=l_\lambda$, the solution to ODE~\eqref{leftr} with boundary condition~\eqref{as_growth} is $r_B(z;l_\lambda)=-\sqrt{2\gamma K}\sigma z+1$. Thus, $z_B(l_\lambda)=0$. From Lemma \ref{lemma_riccati} it follows that if $l_1<l_2$, then $r_B(z;l_1)>r_B(z;l_2)$ and thus $z_B(l_1)>z_B(l_2)$. Hence, the function $l\rightarrow z_B(l)$ is well-defined and decreasing on $[l_\lambda,+\infty)$. Furthermore, $\lim_{l\rightarrow\infty}z_B(l)=-\infty$.

Since $\max_{z\in(-\infty,0)}r(z;l)<1$ if and only if $l<l_\varepsilon$, the function $l\rightarrow z_{NT}(l)$ is well-defined only on $[l_\varepsilon,+\infty)$. Furthermore, it is increasing and $\lim_{l\rightarrow\infty}z_{NT}(l)=0$.

Let $l_M:=\max\{l_\lambda,l_\varepsilon\}$. If $l_\lambda\geq l_\varepsilon$, then $z_{NT}(l_\lambda)< z_B(l_\lambda)=0$.

A brief calculation shows that $z_{NT}(l_\varepsilon)=-\sqrt{\frac{2l_\varepsilon}{\gamma\sigma^2}}$. Thus, for $(z,r)\in(-\infty,z_{NT}(l_\varepsilon))\times\{1\}$ any solution to \eqref{leftr} has strictly positive derivative, while it is strictly negative for $(z,r)\in(z_{NT}(l_\varepsilon),0]\times\{1\}$. In particular, since $r'_B(z_B(l);l)<0$, if $l_\varepsilon\geq l_\lambda$, then $z_B(l_\varepsilon)\in(z_{NT}(l_\varepsilon),0]$. In both cases, $z_{NT}(l_M)< z_B(l_M)$. Given the monotonicity of the functions $z_{NT}(l)$ and $z_B(l)$, there exists a unique $l_*\in(l_M,+\infty)$ such that $z_{NT}(l_*)= z_B(l_*)$.

Without price impact the no-trade region is given by $(z_{NT}(l_\varepsilon),-z_{NT}(l_\varepsilon))$ (see ~\cite[Formula (2.9)]{gerhold.al.14}). Since $z_{NT}(l)$ is an increasing function, the no-trade region with price impact is strictly smaller: $(z_{NT}(l_*),-z_{NT}(l_*))\subset(z_{NT}(l_\varepsilon),-z_{NT}(l_\varepsilon))$.
\end{proof}

Finally, we rigorously prove that the value function $q^\varepsilon(y)$ converges to the solution $r(z)$ specified in Proposition~\ref{prop_asymptotics}.

\begin{proof}[Proof of Proposition~\ref{prop_convergence}]
Rewrite equation~\eqref{eq:odeq} as $q'(y)=f^\varepsilon(y,q(y))$ and equation (\ref{leftr}-\ref{rightr}) as $r'(z)=g(z,r(z))$. Consider a sequence $(\varepsilon_n)_{n\geq0}$ such that $\frac{c(\varepsilon_n)}{\varepsilon_n^{2/3}}$ has limit equal to $l$ (for the rest of the proof we will write $\varepsilon$ instead of $\varepsilon_n$ for simplicity), such a limit exists by Proposition~\ref{limit_esr}. In the proof of Theorem~\ref{mainresult} we show that there exists a function $h^\varepsilon(y)$ defined close to $0$ such that $f^\varepsilon(y,h^\varepsilon(y))=0$, $h^\varepsilon(0^+)=b_0(\varepsilon)$ and $(h^\varepsilon)'(0^+)<0$. In the proof of Lemma~\ref{q_decreasing} we prove that $h^\varepsilon$ is decreasing on its whole definition interval $(0,y_*-M\varepsilon^{1/3})$.

In view of the continuity guaranteed by the implicit function theorem, for $\varepsilon\downarrow0$, the function $\varepsilon^{-1}h^\varepsilon(\varepsilon^{1/3}z+y_*)$ converges to $k(z):=1+2\sqrt{K\left(\frac{\gamma\sigma^2}{2}z^2-l\right)}$, i.e., $k(z)$ such that $g(z,k(z))=0$.\\
In the proof of Theorem~\ref{mainresult} we show that there exists a unique solution $q_0^\varepsilon$ to equation~\eqref{eq:odeq} with limit $q^\varepsilon(0^+)=b_0(\varepsilon)$. Define $q^\varepsilon(y;\bar y, \bar q)$ as the solution to equation~\eqref{eq:odeq} with initial condition $(\bar y,\bar q)$ and $\tilde q_0^\varepsilon(y):=\sup\{q^\varepsilon(y;\bar y,h(\bar y)):\bar y\in(0,y_*-M\varepsilon^{1/3})\}$. Now we will prove that $q_0^\varepsilon(y)=\tilde q_0^\varepsilon(y)$.

First notice that since $(h^\varepsilon)'(y)<0=f^\varepsilon(y,h^\varepsilon(y))$, $h^\varepsilon(y)$ is a subsolution of Equation~\eqref{eq:odeq}. Therefore from $(q_0^\varepsilon)'(0^+)=0$ and $(h^\varepsilon)'(0^+)<0$, we get that $q_0^\varepsilon(y)>h^\varepsilon(y)$ on $(0,y_*-M\varepsilon^{1/3})$. In addition, for any $\bar y$, $q(y;\bar y,h^\varepsilon(\bar y))<h^\varepsilon(y)$ on $(0,\bar y)$. This shows that $q_0^\varepsilon(y)\geq\tilde q_0^\varepsilon(y)$. Let's assume by contradiction that $\tilde q_0^\varepsilon(0^+)<q_0^\varepsilon(0^+)=h^\varepsilon(0^+)$. This means that $\tilde q_0^\varepsilon(\bar y)<h^\varepsilon(\bar y)$ for some $\bar y>0$, that is $\tilde q_0^\varepsilon(\bar y)<q_0^\varepsilon(\bar y;\bar y,h^\varepsilon(\bar y))$, contradicting the maximality of $\tilde q_0^\varepsilon(y)$. This proves that $q_0^\varepsilon(y)=\sup\{q^\varepsilon(y;\bar y,h(\bar y)):\bar y\in(0,y_*-M\varepsilon^{1/3})\}$.

Now, let $r(z;\bar z, \bar r)$ be the solution to equation (\ref{leftr}-\ref{rightr}) with initial condition $(\bar z,\bar r)$. By continuity of the solutions to \eqref{eq:odeq} with respect to parameters, $\varepsilon^{-1}q_0^\varepsilon(\varepsilon^{1/3}z+y_*)$ converges to $r_l(z):=\sup\{r(z;\bar z,k(\bar z)):\bar z\in(-\infty,-M)\}$.

In the proof of Lemma~\ref{lemma_riccati} we show that $r_l(z)$ as defined above satisfies
\begin{equation*}
\lim_{z\rightarrow-\infty}\frac{r_l(z)}{-\sqrt{2K\gamma\sigma^2}z}=1.
\end{equation*}
Recall that $q_1^\varepsilon$ is the unique solution  to equation~\eqref{eq:odeq} with limit $q^\varepsilon(1^-)=b_1(\varepsilon)$. With the same arguments used for $q_0^\varepsilon$, we get that $\varepsilon^{-1}q_1^\varepsilon(\varepsilon^{1/3}z+y_*)$ converges to the unique solution $r_r(z)$ to equation~\eqref{rightr} that satisfies
\begin{equation*}
\lim_{z\rightarrow+\infty}\frac{r_r(z)}{-\sqrt{2K\gamma\sigma^2}z}=1.
\end{equation*}
In the proof of Theorem~\ref{mainresult} we show that, for the optimal value of $\beta$, $q^\varepsilon(y)=q^\varepsilon_0(y)=q^\varepsilon_1(y)$. Thus, the function $\varepsilon^{-1}q^\varepsilon(\varepsilon^{1/3}z+y_*)$ converges to the unique solution of (\ref{leftr}-\ref{rightr}) with growth conditions \eqref{rlminusinfty}. In particular, given the uniqueness of $l_*$ in Proposition~\ref{prop_asymptotics}, $\liminf_{\varepsilon\rightarrow0}\frac{c(\varepsilon)}{\varepsilon^{2/3}}=\limsup_{\varepsilon\rightarrow0}\frac{c(\varepsilon)}{\varepsilon^{2/3}}=l=l_*$ as claimed.
\end{proof}

%----------------------------------------BIB----------------------------------------------------------------------------------------------------------------------
\bibliographystyle{abbrv}
\bibliography{priceimpact}

\begin{thebibliography}{10}

\bibitem{almgren.chriss.01}
R.~F. Almgren and N.~Chriss.
\newblock Optimal execution of portfolio transactions.
\newblock {\em J. Risk}, 3:5--40, 2001.

\bibitem{almgren.li.11}
R.~F. Almgren and T.~M. Li.
\newblock Option hedging with smooth market impact.
\newblock {\em Market Microstructure Liq.}, 2(1):1650002, 2016.

\bibitem{altarovici.al.13}
A.~Altarovici, J.~Muhle-Karbe, and H.~M. Soner.
\newblock Asymptotics for fixed transaction costs.
\newblock {\em Finance Stoch.}, 19(2):702--723, 2015.

\bibitem{bertsimas.lo.98}
D.~Bertsimas and A.~W. Lo.
\newblock Optimal control of execution costs.
\newblock {\em J. Financ. Markets}, 1(1):1--50, 1998.

\bibitem{bichuch.12}
M.~Bichuch.
\newblock Asymptotic analysis for optimal investment in finite time with
  transaction costs.
\newblock {\em SIAM J. Financ. Math.}, 3(1):433--458, 2012.

\bibitem{chen.dai.13}
X.~Chen and M.~Dai.
\newblock Asymptotics for {M}erton problem with small capital gain tax and
  interest rate.
\newblock Preprint, 2013.

\bibitem{dufresne.al.12}
P.~Collin-Dufresne, K.~Daniel, C.~Moallemi, and M.~Saglam.
\newblock Strategic asset allocation with predictable returns and transaction
  costs.
\newblock Preprint, 2013.

\bibitem{constantinides.86}
G.~Constantinides.
\newblock {Capital market equilibrium with transaction costs}.
\newblock {\em J. Polit. Econ.}, 94(4):842--862, 1986.

\bibitem{dai.yi.09}
M.~Dai and F.~Yi.
\newblock Finite-horizon optimal investment with transaction costs: a parabolic
  double obstacle problem.
\newblock {\em J. Diff. Eq.}, 246(4):1445--1469, 2009.

\bibitem{davis.norman.90}
M.~H.~A. Davis and A.~R. Norman.
\newblock Portfolio selection with transaction costs.
\newblock {\em Math. Oper. Res.}, 15(4):676--713, 1990.

\bibitem{dumas.luciano.91}
B.~Dumas and E.~Luciano.
\newblock {An exact solution to a dynamic portfolio choice problem under
  transaction costs}.
\newblock {\em J. Finance}, 46(2):577--595, 1991.

\bibitem{garleanu.pedersen.13a}
N.~Garleanu and L.~H. Pedersen.
\newblock Dynamic trading with predictable returns and transaction costs.
\newblock {\em J. Finance}, 68(6):2309--2340, 2013.

\bibitem{garleanu.pedersen.13b}
N.~Garleanu and L.~H. Pedersen.
\newblock Dynamic portfolio choice with frictions.
\newblock {\em J. Econ. Theory}, 164:487--516, 2016.

\bibitem{gerhold.al.14}
S.~Gerhold, P.~Guasoni, J.~Muhle-Karbe, and W.~Schachermayer.
\newblock {Transaction costs, trading volume, and the liquidity premium}.
\newblock {\em Finance Stoch.}, 18(1):1--37, 2014.

\bibitem{grossman.vila.92}
S.~Grossman and J.~Vila.
\newblock {Optimal dynamic trading with leverage constraints}.
\newblock {\em J. Financ. Quant. Anal.}, 27(2):151--168, 1992.

\bibitem{grossman.zhou.93}
S.~J. Grossman and Z.~Zhou.
\newblock {Optimal investment strategies for controlling drawdowns}.
\newblock {\em Math. Finance}, 3(3):241--276, 1993.

\bibitem{guasoni.robertson.12}
P.~Guasoni and S.~Robertson.
\newblock Portfolios and risk premia for the long run.
\newblock {\em Ann. Appl. Probab.}, 22(1):239--284, 2012.

\bibitem{guasoni.weber.nonlinear}
P.~Guasoni and M.~H. Weber.
\newblock Nonlinear price impact and portfolio choice.
\newblock {\em Preprint}, 2015.

\bibitem{guasoni.weber.13}
P.~Guasoni and M.~H. Weber.
\newblock Dynamic trading volume.
\newblock {\em Math. Finance}, 27(2):313--349, 2017.

\bibitem{janecek.shreve.04}
K.~{Jane\v cek} and S.~E. Shreve.
\newblock Asymptotic analysis for optimal investment and consumption with
  transaction costs.
\newblock {\em Finance Stoch.}, 8(2):181--206, 2004.

\bibitem{kallsen.li.13}
J.~Kallsen and S.~Li.
\newblock Portfolio optimization under small transaction costs: a convex
  duality approach.
\newblock Preprint, 2013.

\bibitem{kallsen.muhlekarbe.13}
J.~Kallsen and J.~Muhle-Karbe.
\newblock The general structure of optimal investment and consumption with
  small transaction costs.
\newblock {\em Math. Finance}, 27(3):659--703, 2017.

\bibitem{kamke.77}
E.~Kamke.
\newblock {\em Differentialgleichungen. L\"osungsmethoden und L\"osungen. I:
  Gew\"ohnliche Differentialgleichungen}.
\newblock B. G. Teubner, Stuttgart, ninth edition, 1977.

\bibitem{karatzas.shreve.91}
I.~Karatzas and S.~E. Shreve.
\newblock {\em Brownian motion and stochastic calculus}.
\newblock Springer, New York, second edition, 1991.

\bibitem{korn.98}
R.~Korn.
\newblock Portfolio optimisation with strictly positive transaction costs and
  impulse control.
\newblock {\em Finance Stoch.}, 2(2):85--114, 1998.

\bibitem{madhavan.00}
A.~Madhavan.
\newblock Market microstructure: a survey.
\newblock {\em Journal Financ. Markets}, 3(3):205--258, 2000.

\bibitem{magill.constantinides.76}
M.~J.~P. Magill and G.~M. Constantinides.
\newblock Portfolio selection with transactions costs.
\newblock {\em J. Econ. Theory}, 13(2):245--263, 1976.

\bibitem{martin.12}
R.~Martin.
\newblock Optimal trading under proportional transaction costs.
\newblock {\em RISK}, August:54--59, 2014.

\bibitem{merton.69}
R.~C. Merton.
\newblock {Lifetime portfolio selection under uncertainty: the continuous-time
  case}.
\newblock {\em Rev. Econ. Statist.}, 51(3):247--257, 1969.

\bibitem{moreau.al.14}
L.~Moreau, J.~Muhle-Karbe, and H.~M. Soner.
\newblock Trading with small price impact.
\newblock {\em Math. Finance}, 27(2):350--400, 2017.

\bibitem{obizhaeva.wang.13}
A.~A. Obizhaeva and J.~Wang.
\newblock Optimal trading strategy and supply/demand dynamics.
\newblock {\em J. Financ. Markets}, 16(1):1--32, 2013.

\bibitem{possamai.al.13}
D.~Possamai, H.~M. Soner, and N.~Touzi.
\newblock Homogenization and asymptotics for small transaction costs: the
  multidimensional case.
\newblock {\em Comm. Part. Diff. Eq.}, 40(11):2005--2046, 2015.

\bibitem{schied.schoeneborn.09}
A.~Schied and T.~Sch\"oneborn.
\newblock Risk aversion and the dynamics of optimal liquidation strategies in
  illiquid markets.
\newblock {\em Finance Stoch.}, 13(2):181--204, 2009.

\bibitem{shreve.soner.94}
S.~E. Shreve and H.~M. Soner.
\newblock Optimal investment and consumption with transaction costs.
\newblock {\em Ann. Appl. Probab.}, 4(3):609--692, 1994.

\bibitem{slater.60}
L.~J. Slater.
\newblock {\em Confluent hypergeometric functions}.
\newblock Cambridge University Press, New York, 1960.

\bibitem{soner.touzi.13}
H.~M. Soner and N.~Touzi.
\newblock Homogenization and asymptotics for small transaction costs.
\newblock {\em SIAM J. Control Optim.}, 51(4):2893--2921, 2013.

\bibitem{whittaker.watson.96}
E.~Whittaker and G.~N. Watson.
\newblock {\em A course of modern analysis}.
\newblock Cambridge University Press, 1996.

\end{thebibliography}

\end{document}